\def\PubType{Conference}
\definecolor{navy}{rgb}{0,0,.5}
\algrenewcommand\alglinenumber[1]{{\scriptsize#1}}
\algrenewcommand\algorithmicrequire{\textbf{Input:}}
\algrenewcommand\algorithmicensure{\textbf{Output:}}
\newcommand{\Ifline}[2]{\State \textbf{if }#1{ \textbf{then} }#2}
\def\mkfancyprefix#1#2{%
\expandafter\def\csname fancyref#1labelprefix\endcsname{#1}%
\begingroup\def\x{\endgroup\frefformat{plain}}%
    \expandafter\x\csname fancyref#1labelprefix\endcsname
    {\MakeLowercase{#2}\fancyrefdefaultspacing##1}%
\begingroup\def\x{\endgroup\Frefformat{plain}}%
    \expandafter\x\csname fancyref#1labelprefix\endcsname
    {#2\fancyrefdefaultspacing##1}%
\begingroup\def\x{\endgroup\frefformat{vario}}%
    \expandafter\x\csname fancyref#1labelprefix\endcsname
    {\MakeLowercase{#2}\fancyrefdefaultspacing##1##3}%
\begingroup\def\x{\endgroup\Frefformat{vario}}%
    \expandafter\x\csname fancyref#1labelprefix\endcsname
    {#2\fancyrefdefaultspacing##1##3}%
}
\fancyrefchangeprefix{\fancyrefeqlabelprefix}{eqn}
\newcommand{\cref}[1]{\Fref{#1}}
\def\ve#1{{\mathchoice{\mbox{\boldmath$\displaystyle #1$}}%
              {\mbox{\boldmath$\textstyle #1$}}%
              {\mbox{\boldmath$\scriptstyle #1$}}%
              {\mbox{\boldmath$\scriptscriptstyle #1$}}}}
\def\confType{Conference} 
\def\jourType{Journal}
\definecolor{orange}{rgb}{1,0.5,0}
  \newcommand{\todo}[1]{}
  \newcommand{\todo}[1]{}
  \newcommand{\todo}[1]{{\color{red}[#1]}}
\newcommand{\param}{\mu}
\newcommand{\K}{K}
\newcommand{\F}{\mathbb{F}}
\newcommand{\R}{{\cal{R}}}
\newcommand{\Q}{{\cal{Q}}}
\newcommand{\N}{\mathbb{N}}
\newcommand{\Z}{\mathbb{Z}}
\newcommand{\Qab}{\Q^{\mathrm{ab}}}
\newcommand{\M}{{\cal{M}}}
\newcommand{\V}{{\cal{V}}}
\newcommand{\MQ}{\Q^{n\times n}} 
\newcommand{\GLQ}{\mathrm{GL}_n(\Q)}
\newcommand{\OD}[1]{{\Delta(#1)}} 
\newcommand{\m}{\ensuremath{\ve{m}}}
\renewcommand{\v}{\ensuremath{\ve{v}}}
\renewcommand{\u}{\ensuremath{\ve{u}}}
\newcommand{\LPc}{\ensuremath{\mathrm{LP}}}
\newcommand{\LP}[1]{\LPc(#1)}
\newcommand{\LT}[1]{\ensuremath{\mathrm{LT}(#1)}}
\newcommand{\LC}[1]{\ensuremath{\mathrm{LC}(#1)}}
\newcommand{\word}[1]{\textnormal{#1}}
\newcommand\modop{\ \word{mod}\ }         
\newcommand{\code}[1]{\textup{\textsf{#1}}}
\newcommand\vecVal{\psi}
\newcommand\Dieu{Dieudonn\'e\xspace}
\DeclareMathOperator{\diag}{diag}
\DeclareMathOperator{\maxdeg}{maxdeg}
\begin{document}

\title{Solving Shift Register Problems over Skew Polynomial Rings using Module
Minimisation}

\author{W. Li\inst{1}, J.S.R. Nielsen\inst{2}, S. Puchinger\inst{1}, V. Sidorenko\inst{1}\inst{3}}

\institute{
Institute of Communications Engineering, Ulm University, Germany\\
\email{\{wenhui.li | sven.puchinger | vladimir.sidorenko\}@uni-ulm.de}
\and
GRACE Project, INRIA Saclay \& LIX, \'Ecole Polytechnique, France\\
\email{jsrn@jsrn.dk}
\and
Institute for Communications Engineering, TU München, Germany}

\maketitle

\begin{abstract}
For many algebraic codes the main part of decoding can be reduced to a shift register synthesis problem.
In this paper we present an approach for solving generalised shift register problems over skew polynomial rings which occur in error and erasure decoding of $\ell$-Interleaved Gabidulin codes.
The algorithm is based on module minimisation and has time complexity $O(\ell \param^2)$ where $\param$ measures the size of the input problem.
\keywords{Skew Polynomials, Ore Polynomials, Shift Register Synthesis, Module Minimisation, Gabidulin Codes}
\end{abstract}

\section{Introduction}

\todo{/jsrn: Some thoughts for improvements for journal version:
  \begin{itemize}
    \item Prove in M-S applied to $M$ that no poly on first column has degree exceeding degree of minimal solution. This implies perhaps fewer simple transformations performed. For D-D it implies smaller degree polys to add and convolute, improving complexity (replacing $\param$ by degree of minimal solution or something.)
    \item In D-D for $g_i$ not satisfying the sparse-etc. requirement, I think we can improve complexity down to $O(\ell \param^2 \log \param)$ as follows (this is faster than M-S when $\log \param \in o(\ell)$.): the operation $\lambda_i s_h \modop g_h$ can be represented as a (symbolic) linear transformation $\ve{\lambda_i}D_\theta\amalg_{s_h}\Xi_{g_h}$, where:
    \begin{itemize}
      \item $D_\theta$ is a diagonal matrix of $\theta$ powers, i.e. $D_\theta = \diag(1, \theta, \ldots, \theta^m)$.
      \item $\amalg_{s_h}$ is the Toplitz matrix where each row is the coefficients of $s_h$.
      \item $\Xi_{g_h}$ is a matrix representing the modulo reduction, i.e. for instance, the $n$th column for $n \geq \deg g_h$ has the coefficients of $x^n \mod g_h$ as its entries and $0$ on all entries $\geq \deg g_h$.
    \end{itemize}
    The point is that $\Xi_{g_h}$ can be precomputed (increasing, alas, the memory requirement to $O(\ell \param^2)$).
    To obtain only one coefficient of $\lambda_h s_h \modop g_h$ we need only one column of $\amalg_{s_h} \Xi_{g_h}$, i.e. we need $\amalg_{s_h} \xi_{g_h,t}$ for some column $\xi_{g_h, t}$.
    Note that this is pure $\K$-multiplication of a Toeplitz matrix with a vector.
    That can be accomplished in something like $O(\param \log \param)$.
    Afterwards, computing $\ve{\lambda_i} D_\theta \vec v$, for the computed column vector $\vec v$, can be done trivially in $O(\param)$.
    \\
    On a meta-level, I find this very interesting: fast commutative arithmetic can be applied by pushing the skewness of the ring into precomputation ($\Xi_{g_h}$) as well as after the bottleneck computation. Could a similar strategy be applied elsewhere?
  \end{itemize}
}

Numerous recent publications have dealt with shaping the core of various decoding algorithms for Reed--Solomon (RS) and other codes around $\F_{q}[x]$ module minimisation, lattice basis reduction or module Gr\"obner basis computation: three computational concepts which all converge to the same in this instance.
First for the Guruswami--Sudan list decoder \cite{alekhnovich_linear_2005,lee_list_2008,cohn_ideal_2010}, then for Power decoding \cite{nielsen2013generalised} and also either type of decoder for Hermitian codes \cite{nielsen_sub-quadratic_2014}. 

The impact of this can be said to be two-fold: firstly, by factoring out coding theory from the core problem, we enable the immediate use of sophisticated algorithms developed by the computer algebra community such as \cite{giorgi_complexity_2003,zhou_efficient_2012}.
Secondly, the setup has proved very flexible and readily applicable in settings which were not envisioned to begin with, such as the aforementioned Power decoder for Hermitian codes, or recently for Power decoding of RS codes up to the Johnson bound \cite{nielsen2014power}.

The main goal of this paper is to extend the module minimisation description to skew polynomial rings and Gabidulin codes, in particular Interleaved Gabidulin codes, with the aim of enjoying similar benefits.
Concretely, we lay a foundation by extending the core terms of weak Popov form and orthogonality defect, as well as extending the elegantly simple Mulders--Storjohann algorithm \cite{mulders2003lattice} to matrices over skew polynomial rings.
We analyse its complexity when applied to the shift register problem which arise when decoding Interleaved Gabidulin codes.
Finally, we extend the Demand--Driven algorithm for $\F_q[x]$ shift register problems \cite{nielsen2013generalised}, which is derived from the Mulders--Storjohann, also to the skew polynomial setting.

Gabidulin codes \cite{delsarte1978bilinear,gabidulin1985theory,roth1991maximum} are maximum rank distance codes with various applications like random linear network coding \cite{koetter2008coding,silva2008rank} and cryptography \cite{gabidulin1991ideals}.
They are the rank-metric analogue of RS codes.
An Interleaved Gabidulin code \cite{loidreau2006decoding,silva2008rank,sidorenko2011skew} is a direct sum of several $(n,k_i)$ Gabidulin codes: these can be decoded in a collaborative manner, improving the error-correction capability beyond the usual half the minimum rank distance of Gabidulin codes.
Similar to Interleaved RS codes, see \cite{schmidt2009collaborative} and its references, the core task of decoding can be reduced to what is known as a multi-sequence skew-feedback shift register synthesis problem \cite{sidorenko2011skew}.

In this paper, we use the introduced module minimisation description to solve a more general form of this problem, which we abbreviate MgLSSR:
\begin{problem}[MgLSSR]\label{prob:modmin}
Given skew polynomials $s_i, g_i$ and non-negative integers $\gamma_i \in \N_0$ for $i=1,\ldots,\ell$, find skew polynomials $\lambda, \omega_1,\dots,\omega_\ell$, with $\lambda$ of minimal degree such that the following holds:
\begin{eqnarray}
&\lambda s_i \equiv \omega_i \mod g_i &  \label{eq:congr} \\
&\deg\omega_i + \gamma_i < \deg \lambda + \gamma_0 &	\label{eq:deg}
\end{eqnarray}
\end{problem}
The original problem of \cite{sidorenko2011skew} set $g_i$ to powers of $x$ and $\gamma_i = 0$.
The above is a natural generalisation, which covers error and erasure decoding of Gabidulin codes \cite{LSS2014}, as well as an Interleaved extension of the Gao-type decoder for Gabidulin codes (\!\!\cite[\S 3.2]{wachter2013interpolation} combined with the ideas of \cite{sidorenko2011skew}).
For cases where the algorithm of \cite{sidorenko2011skew} applies, the Demand--Driven algorithm we present has the same complexity.
However, the more general perspective of module minimisation gives conceptually simpler proofs, and may prove useful for gaining further insights or faster, more sophisticated algorithms.

Normal form computation of matrices over skew rings and Ore rings has been investigated before, e.g.~\cite{abramov_solutions_2001,beckermann2006fraction}, but the focus has been over rings such as $\Z$ or $\K[z]$ for some field $\K$, where coefficient growth is important to control.
Since we are inspired mainly by the application to Gabidulin codes, where the skew ring is over a finite field, we count only operations performed in the field; in this measure those previous algorithms are much slower than what is presented here.

We set basic notation in \cref{sec:notations}.
\cref{sec:modulemin} describes how to solve \cref{prob:modmin} using module minimisation, and gives the Mulders--Storjohann algorithm for skew polynomial modules to accomplish this.
We introduce important concepts for arguing about such modules in \cref{sec:complexity} for performing a complexity analysis.
\cref{sec:demanddriven} describes how to then derive the faster Demand--Driven algorithm.
Due to lack of space, a number of proofs are omitted.

\section{Notation and Remarks on Generality}\label{sec:notations}
Let $\K$ be a field.
Denote by $\R=\K[x;\theta,\delta]$ the noncommutative ring of skew polynomials over $\K$ with automorphism $\theta$ and derivation $\delta$.
Being an Ore extension, $\R$ is both a left and right Euclidean ring.
See \cite{ore1933theory} for more details.

For coding theory we usually take $\K$ as a finite field $\F=\F_{q^r}$ for a prime power $q$ and $\theta$ as the Frobenius automorphism $\theta(a)=a^{q}$ for $a\in \F_{q^r}$.
Also, non-vanishing derivations $\delta$ are usually not considered, a notable exception being \cite{boucher2014linear}.
The algorithms in this paper are correct for any field, automorphism and derivation.
For complexities, we are counting field operations, and we often assume $\delta = 0$.

By $a \equiv b \mod c$  we denote the \emph{right} modulo operation in $\R$, i.e., that there exists $d \in \R$ such that $a = b + d c$.
By ``modules'' we will mean left $\R$-modules.
We extensively deal with vectors and matrices over $\R$.
Matrices are named by capital letters (e.g. $V$). The $i$th row of $V$ is denoted by $\v_i$ and the $j$th element of a vector $\v$ is $v_j$.
$v_{ij}$ is the $(i,j)$th entry of a matrix $V$. Indices start at $0$.
\begin{itemize}\setlength\itemsep{.3em}
\item The \emph{degree of a vector} $\v$ is $\deg \v := \max_{i} \{ \deg v_i \}$ (and $\deg \ve 0 = -\infty$) and the \emph{degree of a matrix} $V$ is $\deg V := \sum_{i} \{ \deg \v_i \}$.
\item The \emph{max-degree} of $V$ is $\maxdeg V := \max_{i} \{ \deg \v_i \} = \max_{i,j} \{ \deg v_{ij} \}$.
\item The \emph{leading position} of a vector $\v$ is $\LP{\v} := \max\{i \, : \, \deg v_i = \deg \v \}$.
      Furthermore $\LT{\v} := v_{\LP{\v}}$ and $\LC \v$ is the leading coefficient of $\LT \v$.
\end{itemize}

\begin{inJournal}
We say that vectors $\v_1,\v_2,\dots,\v_n \in \R^{n}$ are (left) linearly independent over $\R$ if from a vanishing linear combination $a_1 \v_1+a_2 \v_2,\dots,a_n \v_n=0$ it follows that $a_1 = a_2 = \dots = a_n=0$.
Using this notation, we can define the (left) rank of a matrix as the maximum number of rows which are linearly independent over $\R$.
A basis of a module is a set of vectors (e.g., in matrix form) which (left) span the module and are linearly independent.
It can be shown that by performing elementary row operations on a basis we obtain again a basis of the same module \cite{beckermann2006fraction}.
Modules having a basis are called free modules.
\end{inJournal}

\section{Finding a Solution using Module Minimisation}\label{sec:modulemin}

In the sequel we consider a particular instance of \cref{prob:modmin}, so $\R$, $\ell \in \N$, and $s_i, g_i \in \R$, $\gamma_i \in \N_0$ for $i=1,\ldots,\ell$ are arbitrary but fixed.
We assume $\deg s_i \leq \deg g_i$ for all $i$ since taking $s_i := s_i \modop g_i$ yields the same solutions.
Denote by $\M$ the set of all vectors $\v \in \R^{\ell+1}$ satisfying \eqref{eq:congr}, i.e.,
\begin{equation}\label{eq:module_vectors}
\M := \big\{ (\lambda,\omega_1,\dots,\omega_\ell)\in \R^{\ell+1} \; | \; \lambda s_i \equiv \omega_i \mod g_i \; \forall i=1,\dots,\ell \}.
\end{equation}
\vspace*{-1.8em}
\begin{inConference}
\begin{lemma}\label{LemM}\label{lem:basis}
    $\M$ with component-wise addition and left multiplication by elements of $\R$ forms a left module over $\R$.
    The rows of $M$ form a basis of $\M$:
    \renewcommand{\arraystretch}{0.9}
    \[M=\left (
    \begin{array}{ccccc}
    1 & s_1 & s_2 & \dots & s_\ell\\
    0 & g_1 & 0 & \dots  & 0 \\
    0 & 0 & g_2 & \dots  & 0 \\
    \vdots & \vdots & \ddots & \ddots&\vdots \\
    0 & 0 & 0 & \dots  & g_\ell \\
    \end{array}\right)
  \]
\end{lemma}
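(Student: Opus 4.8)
The plan is to establish the two assertions in turn: that $\M$ is a left $\R$-submodule of $\R^{\ell+1}$, and that the rows $\m_0,\dots,\m_\ell$ of $M$ form a basis of it. For the first, observe that the congruence \eqref{eq:congr} says exactly that $\lambda s_i - \omega_i \in \R g_i$, the left ideal generated by $g_i$. Since $\M \subseteq \R^{\ell+1}$, it then suffices to check closure under component-wise addition and under left scalar multiplication, the remaining axioms being inherited from the ambient free module. Closure under addition is immediate. For left multiplication by $a \in \R$ we compute $(a\lambda) s_i - a\omega_i = a(\lambda s_i - \omega_i) \in \R g_i$, so $a\cdot(\lambda,\omega_1,\dots,\omega_\ell) \in \M$. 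The essential point---and the only place where noncommutativity intervenes---is that $\R g_i$ is a \emph{left} ideal, stable under left multiplication precisely because $g_i$ is retained as a right factor; had \cref{prob:modmin} been posed with a left modulo, left scalar multiplication would fail to preserve $\M$.

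For the basis claim I would proceed in three steps. First, each row lies in $\M$: the row $\m_0 = (1,s_1,\dots,s_\ell)$ is the case $\lambda=1$, $\omega_i=s_i$, with $1\cdot s_i \equiv s_i$ trivially, while $\m_j$ (for $1\le j\le\ell$) is the case $\lambda=0$ with $\omega_j=g_j \equiv 0 \mod g_j$ and the other $\omega_i=0$; hence the row span is contained in $\M$. Second, the rows span $\M$: given $(\lambda,\omega_1,\dots,\omega_\ell)\in\M$, choose quotients $d_i\in\R$ with $\lambda s_i = \omega_i + d_i g_i$, and verify coordinate-by-coordinate that $\lambda\,\m_0 - \sum_{i=1}^{\ell} d_i\,\m_i = (\lambda,\omega_1,\dots,\omega_\ell)$. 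Third, the rows are left linearly independent: in a relation $\sum_j a_j \m_j = 0$ the zeroth coordinate forces $a_0=0$, after which the $i$th coordinate reads $a_i g_i = 0$, so $a_i = 0$ because $g_i\neq 0$ and $\R$ has no zero divisors.

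I expect the sole genuine obstacle to be the interplay between the right modulo operation and the left module action; everything else is bookkeeping. Once the left-ideal property of $\R g_i$ is made explicit, the spanning identity and the triangular argument for independence follow mechanically, and I would present that property up front as the structural reason the commutative construction transfers verbatim to the skew setting.
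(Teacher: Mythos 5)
Your proposal is correct and follows essentially the same route as the paper's proof: closure of $\M$ under addition and left multiplication via the identity $v_0 s_i = v_i + a_i g_i$, the spanning decomposition $\v = v_0\m_0 - \sum_i a_i\m_i$, and linear independence from the triangular shape of $M$. Your explicit remark that $\R g_i$ being a \emph{left} ideal is what makes left scalar multiplication work with the \emph{right} modulo is a worthwhile clarification, but it does not change the argument.
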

\end{inConference}
\begin{inJournal}
\begin{lemma}\label{LemM}
The set $\M$ with component-wise addition and left multiplication by elements of $\R$ forms a left module over $\R$.
\end{lemma}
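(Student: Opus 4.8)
The plan is to verify the left-module axioms directly from the definition of $\M$ in \eqref{eq:module_vectors}, using only associativity and distributivity in $\R$ together with the fact that \eqref{eq:congr} is a \emph{right} modulo. First I would check closure under component-wise addition. If $\v=(\lambda,\omega_1,\dots,\omega_\ell)$ and $\v'=(\lambda',\omega_1',\dots,\omega_\ell')$ both lie in $\M$, write $\lambda s_i = \omega_i + d_i g_i$ and $\lambda' s_i = \omega_i' + d_i' g_i$ for suitable $d_i,d_i'\in\R$; right-distributivity then gives $(\lambda+\lambda')s_i = (\omega_i+\omega_i') + (d_i+d_i')g_i$, so $\v+\v'\in\M$. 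The zero vector satisfies \eqref{eq:congr} trivially and additive inverses are handled identically, so $\M$ is an additive subgroup of $\R^{\ell+1}$.

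The only genuinely delicate point is closure under left scalar multiplication, and here the interplay between the left module action and the right modulo must be respected. Given $a\in\R$ and $\v\in\M$ with $\lambda s_i = \omega_i + d_i g_i$, I would left-multiply by $a$ and regroup using associativity: $a\lambda s_i = a\omega_i + (a d_i)g_i$. Since $g_i$ stays on the \emph{right} of the product $(a d_i)g_i$, this reads exactly $(a\lambda)s_i \equiv a\omega_i \mod g_i$, so $a\v = (a\lambda,a\omega_1,\dots,a\omega_\ell)\in\M$. Had the modulo been a left modulo this regrouping would not close up, so I expect the main obstacle to be conceptual rather than computational: namely observing that left multiplication commutes past a right modulo. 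The remaining module axioms follow at once from the corresponding identities in $\R$.

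For the stronger conference claim that the rows of $M$ form a basis, I would argue in three steps; write $\m_j$ for the $j$th row of $M$. Membership: $\m_0$ gives $1\cdot s_i \equiv s_i$, which is trivial, while $\m_j$ for $j\ge 1$ (namely $g_j$ in column $j$, zero elsewhere) gives $0\equiv 0 \mod g_i$ for $i\neq j$ and $0\equiv g_j\equiv 0\mod g_j$ for $i=j$; hence each $\m_j\in\M$ and, $\M$ being a module, so is the whole left row space. Spanning: for arbitrary $\v\in\M$ with $\lambda s_i = \omega_i + d_i g_i$, the combination $\lambda\,\m_0 - \sum_{i=1}^\ell d_i\,\m_i$ has first coordinate $\lambda$ and $j$th coordinate $\lambda s_j - d_j g_j = \omega_j$, so it equals $\v$. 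Independence: $M$ is upper triangular with diagonal $1,g_1,\dots,g_\ell$, so from a vanishing left combination the coefficients vanish successively starting from the first coordinate, using that the $g_j$ are nonzero and that $\R$, being an Ore extension, has no zero divisors. This last invocation of the domain property is the only place the argument reaches beyond the bare ring axioms.
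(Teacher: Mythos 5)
Your proof of the module property is essentially identical to the paper's: both verify that $\M$ is closed under component-wise addition and left multiplication inside $\R^{\ell+1}$ by writing $\lambda s_i = \omega_i + d_i g_i$ and regrouping, and your treatment of the basis claim (membership of each row, the explicit combination $\lambda\,\m_0 - \sum_i d_i\,\m_i$, and independence from the triangular shape) likewise matches the paper's argument for \cref{lem:basis}. The proposal is correct; your extra remark that left multiplication only closes up because the modulo is a \emph{right} modulo is exactly the point the paper's computation $b v_0 s_i = b v_i + (b a_i) g_i$ relies on implicitly.
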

\begin{proof}
Since the direct sum $\R^{\ell+1}$ is a left module over $\R$ and $\M \subset \R^{\ell+1}$, it suffices to show that $\M$ is closed under addition and left multiplication. If $\v \in \M$, it follows from \eqref{eq:module_vectors} that
\begin{equation}\label{eq:congr2}
v_0 s_i = v_i+a_i g_i \quad \forall i \in \{1, \dots, \ell\}
\end{equation}
for some $a_i\in\R$. If $\v,\v'\in \M$ then \eqref{eq:congr2} implies that  $(v_0+v_0') s_i = (v_i+ v'_i)+(a_i+a'_i) g_i$,  $i=1,\dots,\ell$, which means that $\v + \v' \in \M$. If $\v \in \M$ and $b \in \R$ then from \eqref{eq:congr2} we get $b v_0 s_i = b v_i + b a_i g_i$ and thus, $b \v \in \M$.
\end{proof}

\begin{lemma}\label{lem:basis}
The rows of the following matrix $M$ form a basis of $\M$:
\begin{equation}\label{eq:basis}
M=\left (
\begin{array}{ccccc}
1 & s_1 & s_2 & \dots & s_\ell\\
0 & g_1 & 0 & \dots  & 0 \\
0 & 0 & g_2 & \dots  & 0 \\
\vdots & \vdots & \ddots & \ddots&\vdots \\
0 & 0 & 0 & \dots  & g_\ell \\
\end{array}\right)
\end{equation}
\end{lemma}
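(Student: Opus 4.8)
The plan is to check the three defining properties of a basis in turn: that each row of $M$ lies in $\M$, that the rows left-span $\M$, and that they are left linearly independent. Since $\M$ is already known to be a left $\R$-module by \cref{LemM}, only these remain. Write $\m_0,\dots,\m_\ell$ for the rows of $M$. Membership I would read straight off \eqref{eq:module_vectors}: the top row gives $1\cdot s_i = s_i \equiv s_i \mod g_i$, while the $i$th lower row has first coordinate $0$, so it needs $0 \equiv g_i \mod g_i$ in position $i$ and $0\equiv 0 \mod g_j$ elsewhere, both holding trivially.

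For spanning I would take an arbitrary $\v=(\lambda,\omega_1,\dots,\omega_\ell)\in\M$ and clear its coordinates column by column, beginning with the first. Left-multiplying the top row $\m_0$ by $\lambda$ reproduces coordinate $0$ exactly, so the natural move is to form $\v-\lambda\,\m_0 = (0,\,\omega_1-\lambda s_1,\dots,\omega_\ell-\lambda s_\ell)$. The right-modulo congruence \eqref{eq:congr} supplies $d_i\in\R$ with $\lambda s_i = \omega_i + d_i g_i$, so that $\omega_i - \lambda s_i = -d_i g_i$ is precisely $(-d_i)\,\m_i$. Hence $\v = \lambda\,\m_0 + \sum_{i=1}^\ell (-d_i)\,\m_i$, exhibiting $\v$ as a left $\R$-combination of the rows.

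For independence I would suppose $\sum_{i=0}^\ell c_i\,\m_i = \ve 0$ with $c_i\in\R$ and read the coordinates in order. Coordinate $0$ gives $c_0\cdot 1 = 0$, hence $c_0=0$ at once; coordinate $i\geq 1$ then collapses to $c_0 s_i + c_i g_i = c_i g_i = 0$. Since $\R$ is Euclidean and therefore has no zero divisors, and each $g_i\neq 0$, this forces $c_i=0$ for all $i$.

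The one point demanding genuine care, rather than routine bookkeeping, is the pervasive noncommutativity. The module action is by left multiplication, so every linear combination must carry its ring coefficients on the left, and the elimination step in the spanning argument relies on the right-modulo convention fixed in \cref{sec:notations}; placing the factors on the wrong side would break the identity $\omega_i-\lambda s_i = -d_i g_i$. The only structural facts invoked are the absence of zero divisors in $\R$ (for independence) and the standing assumption $g_i\neq 0$; with those in place the argument mirrors the commutative case essentially unchanged.
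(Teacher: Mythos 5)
Your proof is correct and follows essentially the same route as the paper's: the spanning step is the identical decomposition $\v = \lambda\,\m_0 - \sum_i d_i\,\m_i$ obtained from $\lambda s_i = \omega_i + d_i g_i$, and your coordinate-by-coordinate independence argument is just the paper's ``triangular form'' observation spelled out (using that $\R$ is a domain and $g_i \neq 0$). The explicit check that the rows lie in $\M$ is a harmless addition the paper leaves implicit.
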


\begin{proof}
It follows from \eqref{eq:module_vectors} that $\v = (v_0,\dots,v_\ell)\in\M$ $\Leftrightarrow$
$\exists a_1,\dots,a_\ell \in \R$ such that
$v_0 s_i = v_i+a_i g_i \quad \forall i \in \{1, \dots, \ell\}$
Therefore
\begin{equation*}\label{eq:linear_comb}
\v=v_0 \u_0 - a_1 \u_1 -\dots-a_\ell \u_\ell \ ,
\end{equation*}
where $\u_0,\dots,\u_\ell$ are the rows of $M$.
Thus, $\v$ is in the (left) span of $M$.
The rows of $M$ are linearly independent since $M$ is in triangular form.
\end{proof}
\end{inJournal}

\begin{inJournal}
  This proves that $\M$ is a free module.
  More generally, we show in \cref{sec:appendix_IBN} that any submodule of $\R^{n}$ is free and two finite bases of the same $\R$-module have the same number of elements.
\end{inJournal}

The above gives a simple description of all solutions of the congruence relations \eqref{eq:congr}.
To solve \cref{prob:modmin}, we therefore need an element in the $\M$ which satisfies the degree condition \eqref{eq:deg} and has minimal degree.
For this purpose, define
\begin{equation}
\Phi: \R^{\ell+1} \to \R^{\ell+1}, \quad \u = (u_0,\dots,u_\ell) \mapsto (u_0 x^{\gamma_0},\dots,u_\ell x^{\gamma_\ell}).
\end{equation}
We can extend the domain of $\Phi$ to matrices over $\R$ by applying it row-wise.
It is easy to see that $\Phi(\M)$ is also a left $\R$-module and that $\Phi$ is a module isomorphism.
Using this notation, we can restate how to solve \cref{prob:modmin}:
\begin{lemma}\label{lem:popov}
  A vector $\v \in \M^*$ is a solution to \cref{prob:modmin} if and only if $LP(\Phi(\v))=0$ and for all $\ve{u} \in \M^*$ with $LP(\Phi(\ve{u}))=0$ it holds that $\deg \Phi(\v) \le \deg \Phi(\ve{u})$.
\end{lemma}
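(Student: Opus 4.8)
The plan is to decouple \cref{prob:modmin} into the two features encoded separately by $\M$ and by the pair $(\Phi, LP)$: membership in $\M$ captures the congruences \eqref{eq:congr}, while $\Phi$ together with the leading position simultaneously encodes the degree constraint \eqref{eq:deg} and the quantity $\deg\lambda + \gamma_0$ that \cref{prob:modmin} minimises. First I would record the elementary degree identity $\deg(u\, x^{\gamma}) = \deg u + \gamma$ for any nonzero $u \in \R$ and $\gamma \in \N_0$; this is immediate since right multiplication by $x^{\gamma}$ merely shifts the exponents of $u$ upward by $\gamma$, leaving the leading coefficient in place. Applied component-wise to $\v = (\lambda,\omega_1,\dots,\omega_\ell)$, this gives that the $i$th entry of $\Phi(\v)$ has degree $\deg\omega_i + \gamma_i$ for $i \geq 1$ and $\deg\lambda + \gamma_0$ for $i = 0$.

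Second, I would translate the condition $LP(\Phi(\v)) = 0$ into \eqref{eq:deg}. Since $LP$ returns the \emph{largest} index attaining the maximum entry-degree, $LP(\Phi(\v)) = 0$ holds exactly when position $0$ is the strict, unique maximiser, that is $\deg\lambda + \gamma_0 > \deg\omega_i + \gamma_i$ for every $i \in \{1,\dots,\ell\}$, which is precisely \eqref{eq:deg}. Combined with the fact that $\v \in \M$ is by definition equivalent to \eqref{eq:congr}, this shows that the set of $\v \in \M^*$ with $LP(\Phi(\v)) = 0$ is exactly the feasible set of \cref{prob:modmin}: the nonzero vectors satisfying both \eqref{eq:congr} and \eqref{eq:deg}. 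I would note that feasibility already forces $\lambda \neq 0$, since for $\lambda = 0$ the condition \eqref{eq:deg} would read $\deg\omega_i + \gamma_i < -\infty$, which is impossible; hence restricting to $\M^*$ rather than $\M$ loses no feasible vectors.

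Finally, I would match the two notions of minimality. For any feasible $\v$ the strict dominance of position $0$ yields $\deg\Phi(\v) = \deg\lambda + \gamma_0$; as $\gamma_0$ is fixed, minimising $\deg\Phi(\v)$ over the feasible set is equivalent to minimising $\deg\lambda$. Therefore $\v$ has $\lambda$ of minimal degree among feasible vectors, i.e. $\v$ solves \cref{prob:modmin}, if and only if $\v \in \M^*$, $LP(\Phi(\v)) = 0$, and $\deg\Phi(\v) \leq \deg\Phi(\ve{u})$ for every $\ve{u} \in \M^*$ with $LP(\Phi(\ve{u})) = 0$, which is exactly the assertion. The only genuinely delicate points, and the part I would treat most carefully, are the edge cases involving the zero polynomial and $-\infty$ degrees: verifying that the degree identity and the $LP$ characterisation remain correct when some of the $\omega_i$ vanish, and confirming that no feasible vector can have a vanishing first component.
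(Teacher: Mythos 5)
Your proposal is correct and follows essentially the same route as the paper's own (much terser) proof: translating the degree condition \eqref{eq:deg} into $\LP{\Phi(\v)}=0$ via the identity $\deg(u\,x^{\gamma})=\deg u+\gamma$, and matching minimality of $\deg\lambda$ with minimality of $\deg\Phi(\v)$ since $\gamma_0$ is fixed. Your additional care with the $\lambda=0$ and $-\infty$-degree edge cases is a welcome refinement but not a different argument.
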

\begin{proof}
$\v \in \M^*$ is a solution to \cref{prob:modmin} iff it satisfies \eqref{eq:deg} and $v_0$ has minimum possible degree.
That $\v$ satisfies \eqref{eq:deg} means $\deg v_0 + \gamma_0 > \deg v_i + \gamma_i$ and so $\deg (v_0 x^{\gamma_0}) > \deg (v_i x^{\gamma_i})$ i.e.~$\LP{\Phi(\v)}=0$.
The reverse direction is similar.
\end{proof}

So we should find a vector $\v \in \Phi(\M)$ with minimum-degree leading term among vectors with leading position zero.
We do this by finding a basis of $\Phi(\M)$ of a specific form.
This extends similar ideas for matrices over $\K[x]$ \cite{mulders2003lattice,nielsen2013generalised}.
\begin{definition}
A matrix $V$ over $\R$ is in weak Popov form if the leading positions of  all its non-zero rows are different.
\end{definition}

The following \emph{value function} for $\R$ vectors will prove useful: $\vecVal: \, \R^{\ell+1} \to \N_0$,
\[
  \vecVal(\v)=(\ell+1)\deg \v + \LP{\v} + 1 \qquad \textrm{for } \v \neq \vec 0 \textrm{ and } \qquad \vecVal(\vec 0) = 0.
\]

\begin{lemma}\label{lem:minpop}
Let $V$ be a matrix in weak Popov form whose rows are a basis of a left $\R$-module $\V$.
Then every $\u \in \V^*$ satisfies $\deg \u \ge \deg \v$, where $\v$ is the row of $V$ with $\LP{\v}=\LP{\u}$.
\end{lemma}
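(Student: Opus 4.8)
The plan is to express an arbitrary $\u \in \V^*$ in the given basis and then track precisely how the degree and leading position of $\u$ are inherited from the rows of $V$. The first ingredient I would record is the fundamental degree law in $\R$: since $\theta$ is an automorphism it is injective, so leading coefficients never cancel under multiplication and $\deg(ab) = \deg a + \deg b$ for all $a,b\in\R$. As a consequence, for any nonzero $a\in\R$ and any row $\v_i$ of $V$, left-multiplication by $a$ shifts every entry's degree by the same amount $\deg a$; hence $\deg(a\v_i)=\deg a + \deg\v_i$ and, crucially, $\LP{a\v_i}=\LP{\v_i}$, i.e.\ the positions attaining the maximal degree are unchanged.

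Next, since the rows of $V$ form a basis of $\V$, I would write $\u=\sum_i a_i\v_i$ with $a_i\in\R$, put $I=\{i : a_i\neq 0\}$, and set $D=\max_{i\in I}\deg(a_i\v_i)$, so that $\deg\u\le D$ trivially. I then isolate the set of \emph{top} terms $S=\{i\in I : \deg(a_i\v_i)=D\}$. Because $V$ is in weak Popov form, the leading positions $\LP{\v_i}$ are pairwise distinct, so in particular the positions $\{\LP{\v_i} : i\in S\}$ are distinct; let $i^*$ be the unique index in $S$ with the largest leading position and set $j^*=\LP{\v_{i^*}}$.

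The heart of the argument is a no-cancellation check at column $j^*$. In $u_{j^*}=\sum_{i\in I} a_i v_{i,j^*}$ the term $i^*$ contributes degree exactly $D$, since $j^*=\LP{\v_{i^*}}$ forces $\deg v_{i^*,j^*}=\deg\v_{i^*}$. Every other top term $i\in S\setminus\{i^*\}$ has $\LP{\v_i}<j^*$, so $\deg v_{i,j^*}<\deg\v_i$ and its contribution has degree $<D$; each remaining term $i\in I\setminus S$ contributes degree at most $\deg(a_i\v_i)<D$. Thus a single term dominates and $\deg u_{j^*}=D$, giving $\deg\u=D$. Applying the same column-by-column estimate to any $k>j^*$ shows $\deg u_k<D$, so in fact $\LP{\u}=j^*$.

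Finally I would conclude by identifying the relevant row: the row $\v$ of $V$ with $\LP{\v}=\LP{\u}$ is precisely $\v_{i^*}$, and $\deg\u=D=\deg a_{i^*}+\deg\v_{i^*}\ge\deg\v_{i^*}=\deg\v$, which is the claim. The only genuinely delicate point is guaranteeing that the top-degree contribution at column $j^*$ survives; this is exactly where the degree law $\deg(ab)=\deg a+\deg b$ (hence the injectivity of $\theta$) and the distinctness of leading positions supplied by the weak Popov form are both indispensable. Once these are in place, the skew case runs along the same lines as the commutative one.
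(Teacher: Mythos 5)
Your proof is correct and follows essentially the same route as the paper: expand $\u$ in the basis and observe that, because the rows of $V$ have pairwise distinct leading positions, a unique dominant summand $a_{i^*}\v_{i^*}$ survives without cancellation, forcing $\LP{\u}=\LP{\v_{i^*}}$ and $\deg\u=\deg a_{i^*}+\deg\v_{i^*}\ge\deg\v_{i^*}$. The only difference is presentational: the paper packages the no-cancellation step via the value function $\vecVal$ (asserting that $\vecVal$ of a sum of vectors with distinct values equals one of them), whereas you verify that same fact explicitly by the column-$j^*$ degree comparison, using $\deg(ab)=\deg a+\deg b$ in $\R$.
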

\begin{proof}
Let $\u \in \V^*$, and so $\exists a_0,\dots,a_\ell \in \R$ s.t.~$\u = \sum_{i=0}^{\ell} a_i \v_i$.
The $\u_i$ all have different leading position, so the $a_i\v_i$ must as well for those $a_i \neq 0$, which in turn means that the their $\vecVal(a_i\v_i)$ are all different.
Notice that for any two $\u_1, \u_2$ with $\vecVal(u_1) \neq \vecVal(u_2)$, then $\vecVal(\u_1+\u_2)$ either equals $\vecVal(\u_1)$ or $\vecVal(\vec u_2)$.
Applied inductively, that implies that there is an $i$ such that $\vecVal(\u) = \vecVal(a_i\v_i)$, which gives $\LP \u = \LP{\v_i}$ and $\deg \u = \deg a_i + \deg \v_i$.
\end{proof}

\cref{lem:popov} and \cref{lem:minpop} imply that a basis of $\M$ in weak Popov form gives a solution to \cref{prob:modmin} as one of its rows.
The following definition leads to a remarkably simple algorithm for computing such a basis: \cref{alg:ms}, an $\R$ variant of the Mulders--Storjohann algorithm \cite{mulders2003lattice}, originally described for $\K[x]$.

\begin{definition}\label{def:simple_transformation}
  Applying a simple transformation on a matrix $V$ means finding non-zero rows $\v_i, \v_j$, $i \neq j$ such that $\LP{\v_i}= \LP{\v_j}$ and $\deg \v_i \le \deg \v_j$, and replace $\v_j$ by $\v_j - \alpha x^\beta \v_i$,
  where $\beta = \deg v_j - \deg v_i$ and $\alpha = \LC{\v_j}/\theta^{\beta}(\LC{\v_i})$.
\end{definition}

\begin{remark}
  Note that a simple transformations cancels the leading term of the polynomial $\LT{v_j}$.
  Also elementary row operations keep the module spanned by the matrix' rows unchanged, see e.g.~\cite{beckermann2006fraction}, so the same is true for any sequence of simple transformations.
\end{remark}

\begin{lemma}\label{lem:val_dec}
If $\v'$ replaces $\v$ in a simple transformation, then $\vecVal(\v')<\vecVal(\v)$.
\end{lemma}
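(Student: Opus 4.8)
The plan is to track how the two ingredients of $\vecVal$ — the degree and the leading position — change under the transformation, and to show they move so that the combined value strictly decreases. Write $\v = \v_j$ for the row being replaced and $\v' = \v_j - \alpha x^\beta \v_i$ for its replacement, and set $p := \LP{\v_i} = \LP{\v_j}$, $d := \deg \v_j$, and $d' := \deg \v_i$, so that $\beta = d - d'$ with $d' \le d$. Recall $\vecVal(\v_j) = (\ell+1)d + p + 1$. The key observation is that the weighting by $\ell+1$ makes a strict drop in degree dominate any change in leading position (since $\LP{} \le \ell$), so it suffices to show that either $\deg \v'$ drops below $d$, or it stays at $d$ but the leading position strictly decreases.

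The central claim I would establish is that \emph{every coordinate $v'_q$ of $\v'$ at a position $q \ge p$ has degree strictly less than $d$}. At position $p$ this is precisely the cancellation noted in the preceding remark: in $\R$ one has $x^\beta a = \theta^\beta(a)x^\beta + (\text{lower-order terms})$, so the leading term of $\alpha x^\beta v_{i,p}$ is $\alpha\,\theta^\beta(\LC{\v_i})\,x^d$, and the choice $\alpha = \LC{\v_j}/\theta^\beta(\LC{\v_i})$ forces this product to equal $\LC{\v_j}\,x^d = \LT{v_{j,p}}$; hence the degree-$d$ terms cancel and $\deg v'_p < d$. For a position $q > p$ I would instead argue that both summands are already of degree $< d$: because $p = \LP{\v_j}$ is the \emph{largest} position attaining $\deg \v_j = d$, we have $\deg v_{j,q} < d$, and because $p = \LP{\v_i}$ likewise, $\deg v_{i,q} < d'$, whence $\deg(\alpha x^\beta v_{i,q}) = \beta + \deg v_{i,q} < \beta + d' = d$. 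Thus $\deg v'_q < d$ for all $q \ge p$.

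With this claim the conclusion follows by a short case distinction. If $\deg \v' < d$ (this subsumes $\v' = \vec 0$), then using $\LP{\v'} \le \ell$ we get $\vecVal(\v') \le (\ell+1)(d-1) + (\ell+1) = (\ell+1)d < (\ell+1)d + p + 1 = \vecVal(\v)$. Otherwise $\deg \v' = d$; but the claim shows no position $q \ge p$ attains degree $d$ in $\v'$, so the leading position satisfies $\LP{\v'} < p$, giving $\vecVal(\v') = (\ell+1)d + \LP{\v'} + 1 < (\ell+1)d + p + 1 = \vecVal(\v)$. Either way $\vecVal(\v') < \vecVal(\v)$, as desired.

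I expect the only delicate point to be the leading-coefficient bookkeeping in the noncommutative setting: verifying that commuting $x^\beta$ past the leading coefficient of $v_{i,p}$ genuinely produces the factor $\theta^\beta$, that a nonzero derivation $\delta$ contributes only strictly lower-degree terms and hence does not disturb the degree-$d$ coefficient, and that the field-element product $\alpha\,\theta^\beta(\LC{\v_i})$ really collapses to $\LC{\v_j}$. Everything else is the familiar monotonicity argument over $\K[x]$, simply repackaged through the value function $\vecVal$.
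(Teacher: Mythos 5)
Your proof is correct and follows essentially the same route as the paper's: bound $\deg \v' \le \deg \v$, handle the strict-drop case via the $(\ell+1)$ weighting, and otherwise show the leading position strictly decreases because the degree-$d$ term at position $\LP{\v}$ is cancelled by the choice of $\alpha$ and $\beta$ while all positions to the right already had degree below $d$. The only difference is that you spell out the skew-commutation bookkeeping ($x^\beta a = \theta^\beta(a)x^\beta + \text{lower order}$), which the paper leaves to the remark preceding the lemma.
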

\begin{proof}
The operations used in a simple transformation ensure that $\deg \v' \leq \deg \v$.
If $\deg \v' < \deg \v$, we are done because $\LP{\v'} < \ell+1$.
If $\deg \v' = \deg \v$, then $\LP{\v'} < \LP{\v}$: by the definition of the leading position, all terms to the right of $\LP{\v}$ in $\v$ and $\alpha x^\beta \v_i$, and therefore also in $\v'$, have degree less than $\deg \v$.
Furthermore $\deg \v'_{\LP{\v}} < \deg \v$ by the definition of a simple transformation.
\end{proof}
\vspace{-0.5cm}
\begin{algorithm}[H]
  \caption{Mulders--Storjohann for $\R$ matrices}
  \label{alg:ms}
  \begin{algorithmic}[1]
    \Require{A square matrix $V$ over $\R$, whose rows span the module $\V$}
    \Ensure{A basis of $\V$ in weak Popov form.}
    \State Apply simple transformations on the rows of $V$ until no longer possible.
    \State \Return $V$.
  \end{algorithmic}
\end{algorithm}

\begin{theorem}
\cref{alg:ms} is correct.
\end{theorem}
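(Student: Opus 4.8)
The plan is to establish two separate claims: that the algorithm always halts, and that when it halts the rows of $V$ form a basis of $\V$ in weak Popov form. The proof splits cleanly along these lines, and the only nontrivial ingredient is a suitable monovariant for termination.

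First I would argue termination. The algorithm's sole action is to apply simple transformations, each of which replaces a single row $\v_j$ by some $\v_j'$ while leaving every other row untouched. By \cref{lem:val_dec} we have $\vecVal(\v_j') < \vecVal(\v_j)$, so the aggregate quantity $\sum_i \vecVal(\v_i)$ strictly decreases with each transformation. Since $\vecVal$ takes values in $\N_0$, this sum is a non-negative integer, and a strictly decreasing sequence of non-negative integers must be finite. Hence only finitely many simple transformations can be performed, and the algorithm halts.

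Next I would identify the halting configuration with weak Popov form. A simple transformation is applicable exactly when there are two distinct non-zero rows sharing a leading position: given any such pair, one may always take the row of smaller degree as $\v_i$ in \cref{def:simple_transformation}, so a transformation exists; conversely an applicable transformation requires two non-zero rows with equal leading position. Therefore, when no simple transformation remains, the non-zero rows have pairwise distinct leading positions, which is precisely the weak Popov property. To see that the output is still a basis of $\V$, I would invoke the Remark after \cref{def:simple_transformation}: each simple transformation is an elementary row operation and so preserves the spanned module, whence the output rows still span $\V$. Linear independence of the non-zero rows then follows from their distinct leading positions by the same value-function argument as in the proof of \cref{lem:minpop}, since a vanishing left-combination would consist of summands with pairwise distinct $\vecVal$ and thus cannot cancel.

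The main obstacle is the termination argument, and its subtlety is that a single simple transformation lowers only one row's value: tracking any individual $\vecVal(\v_i)$ does not yield a monovariant, so one must pass to the aggregate $\sum_i \vecVal(\v_i)$ to obtain a quantity that strictly decreases at every step. Once this monovariant is in place together with the boundedness $\vecVal \geq 0$, the remaining claims are immediate consequences of \cref{def:simple_transformation} and the preservation of the row span under elementary operations.
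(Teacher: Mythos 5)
Your proposal is correct and follows essentially the same route as the paper: termination via the strictly decreasing non-negative monovariant $\sum_i \vecVal(\v_i)$ from \cref{lem:val_dec}, and the observation that the halting condition is exactly the weak Popov property. The extra verification that the output remains a basis is a welcome but inessential addition, since the paper delegates this to the Remark following \cref{def:simple_transformation}.
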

\begin{proof}
  By \cref{lem:val_dec}, the value of one row of $V$ decreases for each simple transformation.
  The sum of the values of the rows must at all times be non-negative so the algorithm must terminate.
  Finally, when the algorithm terminates there are no simple transformations possible on $V$ anymore, i.e.~there are no $i \neq j$ such that $\LP{\v_i}= \LP{\v_j}$.
  That is to say, $V$ is in weak Popov form.
\end{proof}

This gives an algorithm to solve \cref{prob:modmin}.
The above proof could also easily lead to a rough complexity estimate.
To obtain a more fine-grained one, we will in the next section restrict ourselves to matrices which are square and full rank.

\section{Complexity Analysis}\label{sec:complexity}

Lenstra \cite{lenstra85factoring} introduced the notion of orthogonality defect of square, full rank $\K[x]$ matrices, and in \cite{nielsen2013generalised}, it was shown it can describe the complexity of the Mulders--Storjohann and Alekhnovich \cite{alekhnovich_linear_2005} algorithms for such matrices more fine-grained than originally, and that this improves the asymptotic estimate when the input comes from shift register problems.
The same concept cannot immediately be carried over to $\R$ matrices, since it is defined using the determinant.
For noncommutative rings, there are no functions behaving exactly like the classical determinant, but the \Dieu determinant \cite{dieudonne1943det} shares sufficiently many properties with it for our use.
Simply defining this determinant requires us to pass to the field of fractions of $\R$.

\subsection{Dieudonn\'e Determinant and Orthogonality Defect}

The following algebra is standard for noncommutative rings, so we will go through it quickly; more details can be found in \cite[Chapter 1]{cohn1977skew}.
We know that $\R$ is a principal left ideal domain which implies that it is left Ore and therefore has a unique left field of fractions $\Q = \{s^{-1}r \, : \, r \in \R, s \in \R^*\} /(\sim)$, where $\sim$ is the congruence relation $s^{-1}r  \sim s'^{-1}r' $ if $\exists u,u' \in \R^*$ such that $u r = u' r'$ and $u s = u' s'$.
The degree map on $\R$ can be naturally extended to $\Q$ by defining
\begin{equation*}
\deg: \, \Q \to \mathbb{Z} \cup \{-\infty\}, \quad s^{-1} r \mapsto \deg r - \deg s.
\end{equation*}

Let $[\Q^*,\Q^*]$ be the commutator of $\Q^*$, i.e.~the multiplicative group generated by $\{a^{-1}b^{-1}ab \, : \, a,b \in \Q^*\}$.
Then $\Qab = \Q^*/[\Q^*,\Q^*]$ is an abelian group called the multiplicative abelianization of $\Q^*$.
There is a canonical homomorphism
\begin{equation*}
\phi: \, \Q^* \to \Qab, \quad x \mapsto x \cdot [\Q^*,\Q^*].
\end{equation*}
Since the elements $(a^{-1}b^{-1}ab) \in [\Q^*,\Q^*]$ have degree $\deg(a^{-1}b^{-1}ab) = \deg (ab) - \deg (ba) = 0$, we can pass $\deg$ through $\phi$ in a well-defined manner: $\deg \phi(x) = \deg x$ for all $x \in \Q^*$.
The following lemma was proved by Dieudonn\'e \cite{dieudonne1943det} and can also be found in \cite{draxl1983skew}.

\begin{lemma}\label{lem:detexists}
There is a function $\det:\,\MQ \to \Qab$ s.t. for all $A \in \MQ$, $k \in \Q$:
{\begin{enumerate}
    \setlength{\itemsep}{0.2em}
\item[(i)] $\det I = 1$, where $I$ is the identity matrix in $\MQ$.
\item[(ii)] If $A'$ is obtained from $A$ by an elementary row operation, then $\det A' = \det A$.
\item[(iii)] If $A'$ is obtained from A by multiplying a row with $k$, then $\det A' = \phi(k) \det A$.
\end{enumerate}
}
\end{lemma}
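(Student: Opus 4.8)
The plan is to construct $\det$ explicitly by Gaussian elimination over the division ring $\Q$, reading off its value from the resulting pivots through $\phi$. The case $n=1$ is trivial ($\GLQ = \Q^*$ and $\det = \phi$), so assume $n \ge 2$. Given $A \in \MQ$, I would apply transvections --- elementary row operations of the form ``add a left-$\Q$-multiple of one row to another'' --- together with row swaps, to reduce $A$ to a diagonal matrix $\diag(d_1,\dots,d_n)$. If some $d_i = 0$ then $A$ is singular; I then set $\det A = 0$, either by formally adjoining a zero to $\Qab$ or, since the paper only uses the determinant on square full-rank matrices, by restricting the domain to $\GLQ$. Otherwise I define
\[
  \det A := \phi(d_1)\cdots\phi(d_n) \in \Qab .
\]
As $\Qab$ is abelian and $\phi(ab) = \phi(a)\phi(b) = \phi(b)\phi(a) = \phi(ba)$, this product is insensitive both to the ordering of the factors and to the noncommutativity of the $d_i$. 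Granting well-definedness, the three axioms follow quickly: (i) $I$ is already diagonal with unit pivots, so $\det I = \phi(1) = 1$; (iii) scaling row $i$ by $k$ turns the pivot $d_i$ into $k d_i$, multiplying the product by $\phi(k)$; and (ii) a transvection leaves the diagonal form unchanged. A row swap, being of neither type, factors as three transvections followed by negating one row, so it is subsumed by (ii) and (iii) and merely contributes a factor $\phi(-1)$.

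The main obstacle is well-definedness: the diagonal form is highly non-unique, so I must show $\phi(d_1)\cdots\phi(d_n)$ is independent of the reduction. I would argue group-theoretically. Let $E_n(\Q) \le \GLQ$ be the subgroup generated by all transvections. Any two reductions of an invertible $A$ produce diagonal matrices that differ by left multiplication by a product of transvections and permutation matrices, i.e.\ by an element of $E_n(\Q)$ once the sign of each swap is absorbed into axiom (iii). Thus it suffices to show that the pivot-product map is invariant under left multiplication by $E_n(\Q)$. The conceptual core is the Whitehead identity: for $a,b \in \Q^*$ the blocks $\diag(a,b)$ and $\diag(ab,1)$ are congruent modulo $E_2(\Q) \subseteq E_n(\Q)$, and $\diag(k,k^{-1}) \in E_n(\Q)$. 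Iterating this collapses any diagonal matrix to $\diag(1,\dots,1,d_1\cdots d_n)$ modulo $E_n(\Q)$, whose class maps to $\phi(d_1\cdots d_n)$; since every transvection maps to $1$, the pivot-product descends to a well-defined map on $E_n(\Q)$-cosets. Equivalently, I would invoke Whitehead's lemma, which identifies $E_n(\Q)$ with the commutator subgroup $[\GLQ,\GLQ]$ for $n \ge 2$; the induced map $\GLQ/[\GLQ,\GLQ] \to \Qab$ is then forced, and (ii) holds precisely because transvections lie in the commutator subgroup.

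Finally I would assemble the pieces: prove the Whitehead identity by exhibiting the explicit $2\times2$ factorisation of $\diag(a,b)$ into transvections, deduce $E_n(\Q) = [\GLQ,\GLQ]$ and hence that the assignment above descends to a well-defined $\det:\MQ \to \Qab$, and then check (i)--(iii) as sketched. I expect the only genuinely delicate points to be that $2\times2$ factorisation and the bookkeeping for singular matrices; the compatibility of $\phi$ with the abelian structure of $\Qab$ --- in particular $\phi(ab) = \phi(ba)$, and the already-noted fact that commutators have degree $0$ so that $\deg$ passes through $\phi$ --- renders all ordering and degree questions automatic. As this is a classical theorem, in practice I would cite \cite{dieudonne1943det,draxl1983skew} for the full verification and present only the construction together with the Whitehead identity as the essential content.
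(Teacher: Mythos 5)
The paper never actually proves \cref{lem:detexists}; it attributes the statement to \Dieu \cite{dieudonne1943det} and to \cite{draxl1983skew} and moves on. Your decision to present the construction and defer the full verification to those sources is therefore in the same spirit as the paper, and your sketch --- Gaussian elimination over the division ring $\Q$, the pivot product pushed into $\Qab$ so that $\phi(ab)=\phi(ba)$ kills all ordering issues, the Whitehead identity $\diag(a,b)\equiv\diag(ab,1)$ modulo transvections, and the handling of singular matrices by adjoining $0$ or restricting to $\GLQ$ (which suffices, since the paper only applies $\det$ to full-rank matrices) --- is the standard route taken in those references.

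There is, however, one point at which your argument is circular rather than merely deferred: well-definedness. The Whitehead collapse shows that every diagonal matrix, hence every invertible matrix, is congruent modulo $E_n(\Q)$ to some $\diag(1,\dots,1,d)$; that is the \emph{surjectivity} of $c\mapsto\overline{\diag(1,\dots,1,c)}$ onto $\GLQ/E_n(\Q)$. What well-definedness of the pivot product actually requires is the converse: if $\diag(1,\dots,1,c)\in E_n(\Q)$ then $c\in[\Q^*,\Q^*]$, i.e.\ $\phi(c)=1$. This injectivity is the genuinely hard content of \Dieu's theorem, and it does not follow from $E_n(\Q)=[\GLQ,\GLQ]$ --- that equality tells you the quotient is abelian, while the ``induced map $\GLQ/[\GLQ,\GLQ]\to\Qab$'' you invoke is exactly the map whose existence is at stake. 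The cited sources close this gap by an induction on $n$ (or via uniqueness of the Bruhat normal form), not via the Whitehead identity, so if you intend to give more than a citation, that is the step you must supply. A smaller remark: as you note, a row swap contributes $\phi(-1)$, which need not be trivial in $\Qab$; property (ii) therefore holds verbatim only if ``elementary row operation'' is read as ``add a left multiple of one row to another''. This is harmless for the paper, which only ever uses $\deg\det$ and $\deg\phi(-1)=0$, but it is worth stating the convention explicitly.
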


\begin{definition}
  A function $\det$ with the properties of \cref{lem:detexists} is called a \Dieu determinant.
\end{definition}

Note that contrary to the classical determinant, a \Dieu determinant is generally not unique.
For the remainder of the paper, consider $\det$ to be any given \Dieu determinant.

\begin{lemma}\label{lem:dettriangular}
Let $A \in \MQ$ be in triangular form with non-zero diagonal elements $d_0,\dots,d_{n-1}$.
Then $\det A = \prod_{i=0}^{n-1} \phi(d_i)$.
\end{lemma}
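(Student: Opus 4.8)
The plan is to reduce the triangular matrix $A$ to the diagonal matrix $D=\diag(d_0,\dots,d_{n-1})$ using only the operations whose effect on $\det$ is controlled by \cref{lem:detexists}, and then to evaluate $\det D$ directly from the three axioms.

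First I would clear the off-diagonal entries. Assume $A$ is upper triangular (the lower-triangular case is symmetric, processing columns in the opposite order). I clear the above-diagonal entries column by column, from the last column down to the first. To eliminate an entry $a_{ik}$ with $i<k$, I replace $\v_i$ by $\v_i - (a_{ik}\,d_k^{-1})\,\v_k$; this is an elementary row operation that adds a \emph{left} $\Q$-multiple of row $k$ to row $i$, so by property~(ii) it leaves $\det$ unchanged. The crucial bookkeeping is the invariant that, processing columns in decreasing order, by the time column $k$ is treated the pivot row $\v_k$ has already been reduced to $d_k\ve{e}_k$, since its entries to the right of the diagonal were cleared while processing the later columns. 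Consequently subtracting $(a_{ik}d_k^{-1})\v_k$ affects only the $(i,k)$ entry, sending it to $a_{ik}-(a_{ik}d_k^{-1})d_k = 0$ and leaving every diagonal entry untouched. A short induction on the number of processed columns then shows the end result is exactly $D$, so that $\det A = \det D$.

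Next I would evaluate $\det D$. The matrix $D$ is obtained from the identity $I$ by scaling row $i$ by $d_i$ for each $i$ in turn. Starting from $\det I = 1$ by property~(i) and applying property~(iii) once per row gives $\det D = \phi(d_0)\cdots\phi(d_{n-1})$. Because $\Qab$ is abelian the order of these factors is irrelevant, so $\det A = \det D = \prod_{i=0}^{n-1}\phi(d_i)$, as claimed; note that this holds for \emph{any} fixed \Dieu determinant, since the argument invokes only properties~(i)--(iii).

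The only genuine subtlety — and the step I would be most careful about — is the noncommutativity of $\Q$: I must check that each clearing step is precisely the add-a-left-multiple operation covered by property~(ii), and that the cancellation $a_{ik}\,d_k^{-1}d_k = a_{ik}$ uses the inverse on the correct side. The abelianness of $\Qab$ is exactly what makes the final product well-defined and frees the statement from any ordering ambiguity, so once the reduction to $D$ is justified the conclusion is immediate.
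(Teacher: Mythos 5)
Your proof is correct and follows essentially the same route as the paper: both decompose the triangular matrix into a diagonal scaling part (handled by property~(iii)) and a unipotent part reachable from $I$ by elementary row operations (handled by properties~(i) and~(ii)). The only difference is cosmetic — the paper first left-multiplies each row by $d_i^{-1}$ to obtain a unipotent matrix and then invokes row operations, whereas you clear the off-diagonal entries first and scale $I$ afterwards; your explicit column-by-column elimination just fills in the detail the paper leaves as an assertion.
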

\begin{proof}
  Since $d_i \neq 0$ for all $i$, we can multiply the $i$th row of $A$ by $d_i^{-1}$ and get a unipotent triangular matrix $A'$.
  Any unipotent triangular matrix can be obtained by elementary row operations from the identity matrix $I$. Thus
\begin{equation*}
\det A \overset{\substack{\text{Lemma} \\ \text{\ref{lem:detexists} (iii)}}}{=} \left[ \prod_{i=0}^{n-1} \phi(d_i) \right] \cdot \det A' 
\overset{\substack{\text{Lemma} \\ \text{\ref{lem:detexists} (ii)}}}{=} \left[ \prod_{i=0}^{n-1} \phi(d_i) \right] \cdot \det I 
\overset{\substack{\text{Lemma} \\ \text{\ref{lem:detexists} (i)}}}{=} \prod_{i=0}^{n-1} \phi(d_i).
\end{equation*}
\end{proof}

Clearly, the notion of weak Popov form generalises readily to matrices over $\Q$.
We will now examine how this notion interacts with the \Dieu determinant and introduce the concept of orthogonality defect.
The statements in this section are all $\Q$ variants of the corresponding statements for $\K[x]$ matrices, see \cite{nielsen2013generalised}.
\begin{definition}
The orthogonality defect of $V$ is $\OD{V} := \deg V - \deg \det V$.
\end{definition}

\begin{lemma}\label{lem:ODzero}
If $V \in \GLQ$ is in weak Popov form, then $\OD{V} = 0$.
\end{lemma}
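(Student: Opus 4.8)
The plan is to reduce $V$ to triangular form by Gaussian elimination over the skew field $\Q$ and then invoke \cref{lem:dettriangular}, using the weak Popov structure to control all the degrees that appear.

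First I would exploit that both quantities defining $\OD{V}$ are insensitive to row permutations. Since $V$ is invertible its rows are all nonzero, so by the weak Popov property the leading positions $\LP{\v_0},\dots,\LP{\v_{n-1}}$ form a permutation of $\{0,\dots,n-1\}$. Now $\deg V=\sum_i\deg\v_i$ is symmetric in the rows, and a row swap only multiplies the \Dieu determinant by $\phi(-1)$ (a swap is realizable by three operations of type (ii) and one of type (iii) with $k=-1$), which has degree $\deg(-1)=0$. Hence $\deg\det V$ is also unchanged, and I may assume $\LP{\v_i}=i$ for every $i$. Writing $d_i:=\deg\v_i$, the definition of leading position then yields the two structural facts $\deg v_{ii}=d_i$ and $\deg v_{ij}<d_i$ for all $j>i$.

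Next I would run forward elimination: for $j=0,1,\dots$ and each $i>j$, replace $\v_i$ by $\v_i-a\,\v_j$ with $a=v_{ij}v_{jj}^{-1}$. Each such step is an operation of type (ii) and hence leaves $\det V$ unchanged, and since every pivot row $\v_j$ already carries zeros left of column $j$ by the time it is used, no previously cleared entry is recreated; the outcome is an upper-triangular $U$ with $\det U=\det V$. The heart of the argument — the step I expect to be the main obstacle — is to show that the elimination \emph{preserves the weak Popov structure}, i.e.\ the invariants $\deg v_{ii}=d_i$ and $\deg v_{ij}<d_i\ (j>i)$ hold throughout. This rests on a degree count for the multiplier: because $v_{ij}$ lies in a column $j<i=\LP{\v_i}$ we have $\deg v_{ij}\le d_i$, so $\deg a\le d_i-d_j$; meanwhile the entries of the pivot row strictly right of its own pivot satisfy $\deg v_{jk}<d_j$. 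Consequently $\deg(a\,v_{jk})<(d_i-d_j)+d_j=d_i$ for every $k\ge i$, so neither the pivot $v_{ii}$ (degree exactly $d_i$) nor the right-of-diagonal entries of row $i$ (degree $<d_i$) are disturbed. A short induction confirms that these invariants persist for all rows, and in particular for each pivot row at the moment it is used.

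Finally, $U$ is triangular with nonzero diagonal entries $u_{ii}$ of degree $d_i$, so \cref{lem:dettriangular} gives $\det V=\det U=\prod_i\phi(u_{ii})$ and hence $\deg\det V=\sum_i\deg u_{ii}=\sum_i d_i=\deg V$. Therefore $\OD{V}=\deg V-\deg\det V=0$, as claimed.
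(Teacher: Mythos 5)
Your proof is correct and takes essentially the same route as the paper's: reorder the rows so that $\LP{\v_i}=i$, run Gaussian elimination over $\Q$ while verifying by the same degree count that the diagonal degrees and the weak Popov structure are preserved, and conclude via \cref{lem:dettriangular}. The only (welcome) extra care is your explicit justification that a row swap alters the \Dieu determinant only by $\phi(-1)$, which has degree zero.
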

\begin{proof}[proof sketch]
We can assume that $\LP{v_i}=i$ for all $i$ because if not, we can change the order of the rows of $V$ and obtain a matrix with the same determinant and degree.
We can then apply elementary row operations to bring the matrix to upper triangular form.
After these row operations, the property $\LP{v_i}=i$ is preserved and $\deg v_{ii}$ is equal to $\deg v_{ii}$ of the start matrix for all $i$.
By \cref{lem:dettriangular} the degree of the determinant equals the sum of the degree of the diagonal elements, and hence $\deg V = \deg \det V$.
\end{proof}

\begin{inJournal}
\begin{proof}
Let $V \in \GLQ$.
It is obvious that $\LT{\v_i} \neq 0$ because otherwise $\v_i = (0,\dots,0)$ and $V$ would not be invertible.
We can assume w.l.o.g that $\LP{v_i}=i$ because if not, we can change the order of the rows (elementary row operation) of $V$ and obtain a matrix with the same determinant and degree.
We consider a sequence of matrices $V^{(k)}$ for $k=0,\dots,n-1$, where $V^{(0)} = V$ and $V^{(k+1)}$ is obtained from $V^{(k)}$ by the following elementary row operations:
\begin{equation}
\v_i^{(k+1)} = \v_i^{(k)}-v_{ik}^{(k)} \big(v_{kk}^{(k)}\big)^{-1} \v_k^{(k)} \quad \forall i=k+1,\dots,n-1
\end{equation}
We can prove the following properties for any $k=0,\dots,n-1$ by induction.
\begin{eqnarray}
\LP{\v_i^{(k)}} =& i &\forall i=0,\dots,n-1 \label{eq:c1} \\
\deg v_{ii}^{(k)} =& \deg v_{ii} &\forall i=0,\dots,n-1 \label{eq:c2} \\
v_{ij}^{(k)} =& 0 &\forall j<i \land j<k \label{eq:c3}
\end{eqnarray}
For $k=0$, these properties are fulfilled because $V^{(0)} = V$. Now suppose \eqref{eq:c1} - \eqref{eq:c3} are true for some $k \in \{0,\dots,n-2\}$.
Then \eqref{eq:c3} follows for $k+1$ because
\begin{equation*}
v_{ik}^{(k+1)} = v_{ik}^{(k)} - v_{ik}^{(k)} \big(v_{kk}^{(k)}\big)^{-1} v_{kk}^{(k)} = v_{ik}^{(k)} - v_{ik}^{(k)} = 0 \quad \forall i>k.
\end{equation*}
Due to $\LP{\v_k} = k$, $\deg v_{kj}^{(k)} < \deg v_{kk}^{(k)}$ for $j>k$ and it follows that
\begin{equation*}
\deg \big(v_{ik}^{(k)} \cdot (v_{kk}^{(k)})^{-1} \cdot v_{kj}^{(k)} \big) = \deg v_{ik}^{(k)} + \deg v_{kj}^{(k)} - \deg v_{kk}^{(k)} < \deg v_{ik}^{(k)} \overset{\LP{v_i} = i}{\leq} \deg v_{ii}^{(k)}.
\end{equation*}
Hence, for $j>k$ it holds that
\begin{equation*}
\deg v_{ij}^{(k+1)} = 
\begin{cases}
-\infty < \deg v_{ii} = \deg v_{ii}^{(k+1)} \quad (\text{since } v_{ij}^{(k+1)} = 0), &\text{ if } j\leq k\\[2ex]
\deg \big( \underset{\deg \leq \deg v_{ii}^{(k)}}{ \underbrace{v_{ij}^{(k)}}}- \underset{\deg < \deg v_{ii}^{(k)}}{\underbrace{v_{ik}^{(k)} \cdot (v_{kk}^{(k)})^{-1} \cdot v_{kj}^{(k)}}} \big) \leq \deg v_{ii}^{(k)} = \deg v_{ii}^{(k+1)}, &\text{ if } k<j<i\\[6ex]
\deg \big( v_{ii}^{(k)}- v_{ik}^{(k)} \cdot (v_{kk}^{(k)})^{-1} \cdot v_{ki}^{(k)} \big) = \deg v_{ii}^{(k)} = \deg v_{ii}, &\text{ if } j=i\\[2ex]
\deg \big( \underset{\deg < \deg v_{ii}^{(k)}}{ \underbrace{v_{ij}^{(k)}}}- a_{ik}^{(k)} \cdot (v_{kk}^{(k)})^{-1} \cdot v_{kj}^{(k)} \big) < \deg v_{ii}^{(k)} = \deg v_{ii}^{(k+1)}, &\text{ if } i<j
\end{cases}
\end{equation*}
This implies that $\LP{\v_i} = i$ \eqref{eq:c1} and $\deg v_{ii}^{(k)} = \deg v_{ii}$ \eqref{eq:c2} for $k+1$. Hence, $V^{(n-1)}$ is in upper triangular form, its diagonal elements have degree $\deg v_{ii}^{(n-1)} = \deg v_{ii}$ and by \cref{lem:dettriangular}
\begin{equation*}
\deg \det V^{(n-1)} = \deg \det \big( \prod_{k=0}^{n-1} \phi\big(v_{ii}^{(n-1)}\big) \big) = \sum_{i=0}^{n-1} \deg \phi\big(v_{ii}^{(n-1)}\big) = \sum_{i=0}^{n-1} \deg v_{ii}^{(n-1)} = \sum_{i=0}^{n-1} \deg v_{ii}.
\end{equation*}
Since $V^{(n-1)}$ is obtained from $V$ by elementary row operations, it follows that
\begin{equation*}
\OD{V} = \deg V - \deg \det V = \sum_{k=0}^{n-1} \deg \v_k - \deg \det V^{(n-1)} = \sum_{k=0}^{n-1} \deg v_{ii}  - \sum_{i=0}^{n-1} \deg v_{ii} = 0
\end{equation*}
\end{proof}
\end{inJournal}

\subsection{Complexity of Mulders--Storjohann}

We can now bound the complexity of \cref{alg:ms} using arguments similar to those in \cite{nielsen2013generalised}.
These are in turn, the original arguments of \cite{mulders2003lattice} but finer grained by using the orthogonality defect.
In the following, let $\param := \max_i \{\gamma_i +\deg g_i\}$.
We can assume that $\gamma_0 < \param$ since otherwise $(1, s_1, \ldots, s_\ell)$ is the minimal solution to the MgLSSR.

\begin{lemma}\label{lem:ODM}
$\OD{\Phi(M)} \leq \param-\gamma_0$.
\end{lemma}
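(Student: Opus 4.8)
The plan is to compute $\deg \Phi(M)$ and $\deg\det\Phi(M)$ separately and then subtract, exploiting the fact that $\Phi$ leaves the triangular shape of $M$ intact. Since $\Phi$ multiplies the entry in column $j$ on the right by $x^{\gamma_j}$, the matrix $\Phi(M)$ is again upper triangular, with diagonal entries $x^{\gamma_0}, g_1 x^{\gamma_1}, \dots, g_\ell x^{\gamma_\ell}$ and top-row off-diagonal entries $s_j x^{\gamma_j}$.

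First I would apply \cref{lem:dettriangular}: as $\Phi(M)$ is triangular with non-zero diagonal, and $\deg$ passes through $\phi$ (as noted before \cref{lem:detexists}),
\[
  \deg\det\Phi(M) = \gamma_0 + \textstyle\sum_{i=1}^\ell (\deg g_i + \gamma_i).
\]
Second, using the convention $\deg V = \sum_i \deg \v_i$, I would read off $\deg\Phi(M)$ as a sum of row degrees: each row $i\ge 1$ contributes $\deg g_i + \gamma_i$, while row $0$ contributes $\max\{\gamma_0,\ \max_{1\le j\le\ell}(\deg s_j + \gamma_j)\}$. Subtracting, every $\deg g_i + \gamma_i$ term cancels and leaves
\[
  \OD{\Phi(M)} = \max\{\gamma_0,\ \max_{1\le j\le\ell}(\deg s_j + \gamma_j)\} - \gamma_0 .
\]

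It then suffices to bound the surviving maximum by $\param$, and this is exactly where the two standing hypotheses enter. The assumption $\gamma_0 < \param$ (made immediately before the lemma) controls the first argument, while the normalisation $\deg s_j \le \deg g_j$ (from setting $s_j := s_j \modop g_j$ at the start of \cref{sec:modulemin}) gives $\deg s_j + \gamma_j \le \deg g_j + \gamma_j \le \param$ for each $j$. Hence the maximum is at most $\param$ and the claim $\OD{\Phi(M)} \le \param - \gamma_0$ follows.

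I expect no genuine obstacle here; the only points requiring care are verifying that $\Phi$ preserves triangularity so that \cref{lem:dettriangular} applies directly, and treating row $0$, whose degree is a maximum rather than a single monomial — precisely the place where both assumptions ($\gamma_0<\param$ and $\deg s_j\le\deg g_j$) get consumed.
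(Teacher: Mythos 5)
Your proposal is correct and follows essentially the same route as the paper's own proof: apply \cref{lem:dettriangular} to the (still triangular) matrix $\Phi(M)$ to get $\deg\det\Phi(M) = \gamma_0 + \sum_{i=1}^{\ell}(\gamma_i+\deg g_i)$, sum the row degrees, cancel, and bound the remaining maximum using $\deg s_j \le \deg g_j$ and $\gamma_0 \le \param$. Your treatment of row $0$ is in fact slightly more explicit than the paper's, which folds the $\gamma_0$ term into the maximum by implicitly treating $s_0 = 1$.
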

\begin{inJournal}
\begin{proof}
Since $\Phi(M)$ is in upper triangular form, the degree of its determinant is
\begin{equation*}
  \textstyle
\deg \det \Phi(M) = \deg \left(\phi(x^{\gamma_0}) \prod_{i=0}^{\ell} \phi(g_i x^{\gamma_i}) \right) = \gamma_0 + \sum_{i=0}^{\ell}(\gamma_i + \deg g_i).
\end{equation*}
The degrees of the rows of $\Phi(M)$ are
\begin{equation*}
\deg \Phi(\m_0) = \max_i \{\gamma_i + \deg s_i\}
\quad \text{and} \quad
\deg \Phi(\m_i) = \gamma_i + \deg g_i \quad \text{for } i \geq 1.
\end{equation*}
Thus, 
$ 
\OD{\Phi(M)} = \max_i \{\gamma_i +\deg s_i\} - \gamma_0 \leq \param-\gamma_0.
$ 
\end{proof}
\end{inJournal}

\begin{theorem}
  \label{thm:ms_complexity}
Over $\R$ with derivation zero, \cref{alg:ms} with input matrix $\Phi(M)$ performs at most $(\ell+1) (\param-\gamma_0+1)$ simple transformations and performs $O(\ell^2 \param^2)$ operations over $\K$.
\end{theorem}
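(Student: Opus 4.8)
The plan is to bound two quantities separately: the number of simple transformations, and the cost of each transformation in field operations. For the number of transformations, I would use the value function $\vecVal$ together with the orthogonality defect, exactly as in the $\K[x]$ analysis of \cite{nielsen2013generalised}. By \cref{lem:val_dec}, each simple transformation strictly decreases the value of one row, hence strictly decreases the total value $\Psi(V) := \sum_i \vecVal(\v_i)$ by at least one. Since $\Psi$ is always non-negative, the number of transformations is bounded by the initial total value minus the final total value. The key step is to relate these values to the degree of the matrix: since $\vecVal(\v) = (\ell+1)\deg\v + \LP\v + 1$ and $\LP\v \in \{0,\dots,\ell\}$, we have $(\ell+1)\deg V \le \Psi(V) \le (\ell+1)\deg V + (\ell+1)(\ell+1)$, so $\Psi(V)$ is tightly controlled by $\deg V$. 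The decrease in $\deg V$ over the whole run is what I must bound.

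The crucial observation is that simple transformations are elementary row operations, so by \cref{lem:detexists}(ii) the \Dieu determinant, and hence $\deg\det V$, is invariant throughout the algorithm. Therefore $\OD{V} = \deg V - \deg\det V$ only decreases insofar as $\deg V$ decreases, and since the final matrix $V_{\mathrm{end}}$ is in weak Popov form and full rank, \cref{lem:ODzero} gives $\OD{V_{\mathrm{end}}}=0$. Thus the total decrease in $\deg V$ equals the initial orthogonality defect $\OD{\Phi(M)}$, which by \cref{lem:ODM} is at most $\param-\gamma_0$. Feeding this into the value bound, the number of simple transformations is at most $(\ell+1)\big(\OD{\Phi(M)}\big) + (\ell+1) \le (\ell+1)(\param-\gamma_0+1)$, matching the claimed figure. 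I should be slightly careful with the additive $\LP$ contributions when $\deg V$ stays constant but $\LP$ drops, but these only help the bound and are absorbed by the $+1$ factor.

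For the per-transformation cost, I would observe that a simple transformation replaces $\v_j$ by $\v_j - \alpha x^\beta \v_i$. Computing $\alpha$ requires one application of $\theta^\beta$ and one inversion in $\K$; forming $\alpha x^\beta \v_i$ and subtracting it from $\v_j$ touches all $\ell+1$ entries, each a skew polynomial of degree $O(\param)$. Since derivation is zero, multiplication by $x^\beta$ only shifts coefficients and applies powers of $\theta$, so each entry costs $O(\param)$ field operations (the shift-and-add of two degree-$O(\param)$ polynomials, with $\theta$-twists applied coefficientwise). Hence one simple transformation costs $O(\ell\param)$ field operations. I would then note that the degrees of rows never exceed the initial $\maxdeg\Phi(M) = O(\param)$, since $\deg$ is non-increasing under simple transformations, so this per-step bound is uniform over the whole run.

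Multiplying the two bounds gives $(\ell+1)(\param-\gamma_0+1)\cdot O(\ell\param) = O(\ell^2\param^2)$, as claimed. The main obstacle I anticipate is the first half: justifying that the total degree decrease is controlled by the orthogonality defect requires invoking determinant invariance (\cref{lem:detexists}(ii)) and the vanishing defect of the weak Popov output (\cref{lem:ODzero}) in the noncommutative setting, and then carefully converting the degree bound into a transformation count via the value function rather than a direct degree count. The per-operation cost is routine once one confirms that the $\theta$-twists from multiplying by $x^\beta$ do not increase the asymptotic cost beyond ordinary coefficient arithmetic, which is where the $\delta = 0$ hypothesis is used.
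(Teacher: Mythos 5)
Your proposal follows essentially the same route as the paper: bound the number of simple transformations by the drop in the total value $\sum_i\vecVal(\v_i)$, convert that drop into $\OD{\Phi(M)}\le\param-\gamma_0$ via determinant invariance under row operations and \cref{lem:ODzero}, and multiply by the $O(\ell\param)$ cost per transformation. The one place where your accounting as written falls short of the stated constant is the generic sandwich $(\ell+1)\deg V\le\sum_i\vecVal(\v_i)\le(\ell+1)\deg V+(\ell+1)^2$, whose slack is $(\ell+1)\ell$ rather than $(\ell+1)$; to get the claimed $(\ell+1)(\param-\gamma_0+1)$ one must, as the paper does, compute the leading-position sums exactly (the input rows $1,\dots,\ell$ have $\LP{\Phi(\m_i)}=i$ and the weak Popov output has all leading positions distinct, so the two sums differ only by $\LP{\Phi(\m_0)}\le\ell$), which is precisely the refinement your "careful with the additive $\LP$ contributions" remark must supply.
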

\begin{proof}
Every simple transformation reduces the value $\vecVal$ of one row with at least 1.
So the number of possible simple transformations is upper bounded by the difference of the sums of the values of the input matrix $\Phi(M)$ and the output matrix $V$, i.e.:
\begin{eqnarray*}
&&   {\textstyle \sum_{i=0}^{\ell}} [(\ell\!+\!1)\deg \Phi(\m_i)\!+\!\LP{\Phi(\m_i)}\!-\!\big( (\ell\!+\!1)\deg \Phi(\v_i)\!+\!\LP{\v_i} \big)] \\
&=& \LP{\Phi(\m_0)} + (\ell\!+\!1) {\textstyle \sum_{i=0}^{\ell}} [\deg \Phi(\m_i)\!-\!\deg \v_i]  \\
&\leq& (\ell\!+\!1) [\deg \Phi(M)\!-\!\deg V\!+\!1] = (\ell\!+\!1) [\OD{\Phi(M)}\!+\!1] ,
\end{eqnarray*}
where the last equality follows from $\deg V = \deg \det V = \deg \det M$.

One simple transformation consists of calculating $\v_j - \alpha x^\beta \v_i$, so for every coefficient in $\v_i$, we must apply $\theta^\beta$, multiply by $\alpha$ and then add it to a coefficient in $\v_j$, each being in $O(1)$.
Since $\deg v_j \leq \param$ this costs $O(\ell \param)$.
\end{proof}

\section{Demand-Driven Algorithm}\label{sec:demanddriven}

\def\p#1{\tilde #1}
\def\prev{\code{previous}}
\def\continue{\code{continue}}
\def\ass{\leftarrow}

It was observed in \cite{nielsen2013generalised} that the Mulders--Storjohann algorithm over $\K[x]$ admits a ``demand--driven'' variant when applied to matrices coming from shift register problems, where coefficients of the working matrix are computed only when they are needed.
This means a much lower memory requirement, as well as a better complexity under certain conditions.
Over $\R$, \cref{alg:ms} admits exactly the same speedup;
in fact, both the algorithm and the proof are almost line-for-line the same for $\R$ as for $\K[x]$.
We therefore focus on the idea of the algorithm, and the original proofs can be found in  \cite{nielsen2013generalised} (extended version).

The central observation is that due to the special form of $M$ of \cref{lem:basis}, only the first column is needed during the Mulders--Storjohann algorithm in order to construct the rest.
That is formalised in the following lemma:

\begin{lemma}
  \label{lem:msVar}
  Consider Algorithm \ref{alg:ms} with input $\Phi(M)$.
  Consider a variant where, when replacing $\vec v_j$ with $\vec v'_j$ in a simple transformation, instead replace it with $\vec v''_j = (v'_{j,0}, v'_{j,1} \modop \p g_1, \ldots, v'_{j,\ell} \modop \p g_\ell)$.
  This does not change correctness of the algorithm or the upper bound on the number of simple transformations performed.
\end{lemma}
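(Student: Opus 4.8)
The plan is to split the statement into its two assertions --- the bound on the number of simple transformations and correctness --- and reduce both to facts about the module generators $\Phi(\m_1),\dots,\Phi(\m_\ell)$, where $\Phi(\m_i)$ is the $i$-th row of $\Phi(M)$: the row carrying the diagonal entry $\tilde{g}_i=g_i x^{\gamma_i}$ in position $i$ and zeros elsewhere. The starting observation is that passing from $\vec v'_j$ to $\vec v''_j$ replaces each component $v'_{j,i}$ ($i\ge1$) by $v'_{j,i}\bmod\tilde{g}_i$, i.e.\ subtracts a left multiple $q_i\tilde{g}_i$ from position $i$ only. Since $\Phi(\m_i)\in\Phi(\M)$ is exactly the row with $\tilde{g}_i$ in position $i$, this means $\vec v''_j=\vec v'_j-\sum_{i\ge1}q_i\Phi(\m_i)\in\Phi(\M)$. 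Hence every working row stays inside $\Phi(\M)$, and the submodule $\V''$ generated by the working matrix always satisfies $\V''\subseteq\Phi(\M)$.

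For the transformation count I would re-run the value-function argument behind \cref{thm:ms_complexity}. The extra reduction only lowers the degrees of the components indexed $1,\dots,\ell$ and leaves component $0$ fixed, so $\deg\vec v''_j\le\deg\vec v'_j$; moreover the set of positions attaining the maximal degree can only shrink, so $\LP{\vec v''_j}$ cannot move to a larger index. Consequently $\vecVal(\vec v''_j)\le\vecVal(\vec v'_j)<\vecVal(\vec v_j)$ by \cref{lem:val_dec}, so each simple transformation still strictly drops the total value while the interleaved reductions drop it no less. Since the output $V''$ is in weak Popov form it is nonsingular over $\Q$, whence $\OD{V''}=0$ by \cref{lem:ODzero}; combined with $\deg\det V''\ge\deg\det\Phi(M)$ (because $\V''\subseteq\Phi(\M)$) and \cref{lem:ODM}, the same telescoping estimate recovers the upper bound $(\ell+1)(\param-\gamma_0+1)$.

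The delicate point --- and the main obstacle --- is \emph{correctness}: that $V''$ is still a full \emph{basis} of $\Phi(\M)$ and not merely a weak Popov basis of some proper submodule, since \cref{lem:minpop} and \cref{lem:popov} certify the leading-position-$0$ row as a solution of \cref{prob:modmin} only when all of $\Phi(\M)$ is spanned. I would prove this through the \Dieu determinant: every simple transformation is an elementary row operation and hence preserves $\det$ exactly by \cref{lem:detexists}(ii), so it suffices to show each reduction does too, i.e.\ that $\sum_{i\ge1}q_i\Phi(\m_i)$ lies in the span of the working rows other than $\vec v'_j$. Keeping the rows $\Phi(\m_1),\dots,\Phi(\m_\ell)$ available as explicit reducers would give $\deg\det V''=\deg\det\Phi(M)$, and therefore $\V''=\Phi(\M)$ by equality of determinant degrees. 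The subtlety I expect to fight is precisely that a reduction is \emph{not} a simple transformation, and that a reduced working row may share a leading position with a modulus row while having strictly smaller degree; guaranteeing that the relations $\Phi(\m_i)$ stay expressible through the working rows throughout the run (equivalently, that $\det$ is genuinely invariant) is the bookkeeping carried out in the $\K[x]$ proof of \cite{nielsen2013generalised}, which transfers once the \Dieu determinant replaces the ordinary one.
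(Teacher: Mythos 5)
The paper itself gives no proof of this lemma (it is one of the proofs ``omitted'' and deferred to the extended version of the cited $\K[x]$ paper), so there is nothing in-paper to compare against line by line; I can only assess your outline on its merits. The two easier halves are handled correctly and in the intended way: each reduced row $\vec v''_j$ stays in $\Phi(\M)$ because it differs from $\vec v'_j$ by a left $\R$-combination of the rows $\Phi(\m_1),\dots,\Phi(\m_\ell)$, and $\vecVal(\vec v''_j)\le\vecVal(\vec v'_j)<\vecVal(\vec v_j)$ since position $0$ is untouched while the other positions can only drop in degree; combined with $\deg\det V''\ge\deg\det\Phi(M)$ for a full-rank submodule of $\Phi(\M)$, the telescoping bound of \cref{thm:ms_complexity} survives. (You should still rule out zero rows, since \cref{lem:ODzero} requires $V''\in\GLQ$.)

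The gap is exactly where you locate it, but you have not closed it, and the sufficient condition you propose is not automatically satisfied. For the output to certify minimality over all of $\Phi(\M)$ via \cref{lem:minpop} and \cref{lem:popov}, the working rows must keep generating $\Phi(\M)$ itself; a reduction subtracts $\sum_h q_h\Phi(\m_h)$ from the row being replaced, and this preserves the row space only if every $\Phi(\m_h)$ with $q_h\neq 0$ lies in the span of the \emph{other} current rows. That is trivially true while row $h$ is still the unmodified $\Phi(\m_h)$, but once that row has itself been replaced, a later reduction at position $h$ (which does occur, e.g.\ when $\deg \tilde{g}_h$ is small relative to the degree of the row currently being replaced) subtracts a vector that need not lie in the span of the remaining rows, so a priori the row space can shrink to a proper full-rank submodule; then $\deg\det V''>\deg\det\Phi(M)$ and the leading-position-$0$ row is only minimal within that submodule, which is not what \cref{lem:popov} needs. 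Writing that this ``is the bookkeeping carried out in the $\K[x]$ proof'' is a citation, not a proof: the missing ingredient is a structural invariant --- for instance, that every replaced row equals the unique fully reduced element of $\Phi(\M)$ determined by its first entry, while the left ideal generated by the first column stays equal to $\R x^{\gamma_0}$ --- from which minimality over $\Phi(\M)$, and not merely over the spanned submodule, can be recovered. As it stands, the correctness half of your argument is an accurate description of what must be proved rather than a proof of it.
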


The Demand--Driven algorithm, \cref{alg:dd}, therefore calculates just the first element of a vector whenever doing a simple transformation, being essentially enough information.
To retain speed it is important, however, that the algorithm can also figure out which simple transformation it can next apply, without having to recompute the whole matrix.
For this, we cache for each row its degree $\eta_j$ and the leading coefficient of its leading position $\alpha_j$.
The following observations then lead to \cref{alg:dd}:

\begin{enumerate}
  \item In $\Phi(M)$ there is at most one possible choice of the first simple transformation, due to the matrix' shape.
  This is true throughout the algorithm, making it deterministic.
  \item To begin with, if there is a possible simple transformation, row 0 is involved.
  Just before doing a simple transformation, we possibly swap the two rows involved such that the row changed is always row 0.
  That means row 0 is always involved if there is a possible simple transformation, and that the algorithm terminates when row 0 has leading position 0.
  \item To begin with row $i$ has leading position $i$ for $i>0$.
  The above swap ensures that this will keep being true.
  \item After doing a simple transformation, we need to update the degree, leading position and leading coefficient of only row 0; the rest remains unchanged.
  We do this by going through each possible degree and leading position in decreasing order of value $\vecVal$.
  This is correct since we know that the simple transformation must decrease the value of row 0.
\end{enumerate}

\begin{algorithm}[t]
  \caption{Demand--Driven algorithm for MgLSSR}
  \label{alg:dd}
  \begin{algorithmic}[1]
    \Require{$\p s_j \ass s_{1,j}x^{\gamma_j},\ \p g_j \ass g_jx^{\gamma_j}$ for $j=1,\ldots,\ell$\;}
    \Ensure{The first column of a basis of $\M$ whose $\Phi$ image is in weak Popov form. \;}
    \State $(\eta, h) \ass (\deg, \LPc)$ of $(x^{\gamma_0}, \p s_1, \ldots, \p s_\sigma)$
    \Ifline{$h = 0$}{\Return $(1, 0, \ldots, 0)$}
    \State $(\lambda_0,\ldots,\lambda_\ell) \ass (x^{\gamma_0},0,\ldots,0)$
    \State $\alpha_jx^{\eta_j} \ass $ the leading monomial of $\p g_j$ for $j=1,\ldots,\ell$
    \While{$\deg \lambda_0 \leq \eta$}
      \State $\alpha \ass $ coefficient to $x^\eta$ in $(\lambda_0\p s_h \mod \p g_h)$
      \label{line:dd_coefficient}
      \If{$\alpha \neq 0$}
      \label{line:dd_ifstatement}
        \Ifline{$\eta < \eta_h$}{swap $(\lambda_0,\alpha,\eta)$ and $(\lambda_h, \alpha_h, \eta_h)$}
        \State $\lambda_0 \ass \lambda_0 - \alpha/\theta^{\eta-\eta_h}(\alpha_h) x^{\eta-\eta_h} \lambda_h$
      \label{line:dd_transformation}
      \EndIf
      \State $(\eta, h) \ass  (\eta, h-1) \textbf{ if } h>1 \textbf{ else } (\eta-1, \ell)$
    \EndWhile
    \State \Return $\big(\lambda_0x^{-\eta_0},\ldots,\lambda_\ell x^{-\eta_0}\big)$
  \end{algorithmic}
\end{algorithm}

\def\supp{\word{supp}}
To express the complexity, by $\supp(f), f \in \R$, we mean the set of degrees such that $f$ has a non-zero coefficient for this degree.
By $\deg_2 f$ we mean the degree of the second largest coefficient.
Let again $\param := \max_i \{\gamma_i +\deg g_i\}$.

\begin{theorem}
  Algorithm \ref{alg:dd} is correct.
  Over $\R$ with derivation zero, it has computational complexity $O(\ell \param^2 \rho)$, where
  \[
    \rho = \left\{
      \begin{array}{l@{\quad}l}
        \max_i\{\#\supp(g_i)\} & \textrm{if } \deg_2 g_i < \frac{1}{2} \deg g_i \textrm{ for all } i \\
        \param                     & \textrm{otherwise}
      \end{array}\right.
  \]
  It has memory complexity $O(\ell \param)$.
\end{theorem}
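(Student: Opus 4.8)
The plan is to prove correctness by exhibiting \cref{alg:dd} as a bookkeeping-reduced simulation of the Mulders--Storjohann variant of \cref{lem:msVar} applied to $\Phi(M)$, in which only the first column of the working matrix is ever materialised. The central invariant I would maintain is that, at every point of the execution, the tuple $(\lambda_0,\dots,\lambda_\ell)$ is the first column of the matrix that the variant of \cref{lem:msVar} would currently hold, and that every full row $j$ is recoverable from its first entry alone, namely as $(\lambda_j,\ \lambda_j \p s_1 \modop \p g_1,\dots,\lambda_j \p s_\ell \modop \p g_\ell)$ up to the $\Phi$-shift. This is exactly the special structure of $M$ highlighted before \cref{lem:msVar}: once the first coordinate is fixed, the congruences $v_0 s_i \equiv v_i \bmod g_i$ determine all remaining coordinates. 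Consequently the $h$-th coordinate of the active row --- the only coordinate ever needed to read off a leading term --- is $\lambda_0 \p s_h \modop \p g_h$, and its coefficient at $x^\eta$ is precisely the $\alpha$ computed in \cref{line:dd_coefficient}.

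With the invariant in hand, the four observations preceding the algorithm become the correctness proof. Because in $\Phi(M)$ the rows $1,\dots,\ell$ carry the distinct leading positions $1,\dots,\ell$, the only admissible simple transformation involves row $0$; the conditional swap performed when $\alpha\neq 0$ (\cref{line:dd_ifstatement}) keeps the modified row as row $0$ and preserves $\LP{\v_i}=i$ for $i\ge 1$, so the next transformation is always forced and the process is deterministic. Scanning the pairs $(\eta,h)$ in strictly decreasing order of the value $\vecVal$ correctly locates the new leading term of row $0$ after each transformation, since by \cref{lem:val_dec} a simple transformation strictly decreases $\vecVal$ of the modified row; hence no pair ever needs revisiting. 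Termination and the bound on the number of transformations are inherited verbatim from \cref{lem:msVar} and \cref{thm:ms_complexity}. When the scanning pointer drops below $\deg\lambda_0$, no further transformation can apply, so $\Phi(M)$ has been brought to weak Popov form with row $0$ at leading position $0$, and by \cref{lem:popov} together with \cref{lem:minpop} the returned first column is (the $\Phi$-preimage of) a minimal solution to \cref{prob:modmin}.

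For the running time I would (i) bound the number of while-loop iterations and (ii) bound the amortised work per iteration, dominated by \cref{line:dd_coefficient}. For (i), the pointer $(\eta,h)$ advances by one slot each iteration, ranges over $O(\param)$ values of $\eta$ and $\ell$ values of $h$, and increases only through swaps, giving $O(\ell\param)$ iterations. For (ii), the crux is extracting a single coefficient of $\lambda_0 \p s_h \modop \p g_h$. Since the naive skew product $\lambda_0 \p s_h$ already costs $\Theta(\param^2)$, the target bound $O(\param\rho)$ per iteration cannot come from recomputation; the plan is instead to maintain the reduced products for the active row incrementally across transformations and to charge the cost to the underlying reduction steps modulo $\p g_h$. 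The role of $\rho$ is to measure the cost of one such reduction step: reducing a single power of $x$ modulo $g_i$ touches $O(\#\supp(g_i))$ coefficients when $\deg_2 g_i<\tfrac12\deg g_i$, because the large gap below the leading term keeps each reduction localised and never repopulates the top half of the support, whereas in general it touches $O(\param)$ coefficients. Multiplying the $O(\ell\param)$ iterations by the $O(\param\rho)$ amortised per-iteration bound yields $O(\ell\param^2\rho)$. The genuinely technical step --- the amortised coefficient maintenance and the sparse-reduction estimate justifying the case split in $\rho$ --- is the main obstacle, and it is the skew analogue of the corresponding accounting in \cite{nielsen2013generalised}, to which the argument ultimately reduces.

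Finally, the memory bound is immediate. The algorithm stores the $\ell+1$ first-column polynomials $\lambda_j$, the shifted data $\p s_j,\p g_j$, and, for the amortisation above, the reduced products of the active row, together with the scalar caches $\alpha_j,\eta_j$. Each polynomial has degree $O(\param)$ and there are $O(\ell)$ of them, for $O(\ell\param)$ field elements in total, while the caches contribute only $O(\ell)$. Hence the memory complexity is $O(\ell\param)$.
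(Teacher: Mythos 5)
Your correctness argument tracks the paper's: the invariant that every row of the working matrix is determined by its first entry via the congruences, together with the four observations about determinism, swapping, and the decreasing value estimate, is exactly how the paper justifies \cref{alg:dd}; your $O(\ell\param)$ iteration bound and the $O(\param)$ cost of \cref{line:dd_transformation} also match. The gap is in your treatment of \cref{line:dd_coefficient}. You assert that the bound $O(\param\rho)$ per iteration ``cannot come from recomputation'' because the full product $\lambda_0\tilde s_h$ costs $\Theta(\param^2)$, and you therefore propose to maintain the reduced products of the active row incrementally and amortise the reduction cost. That premise is false, and the alternative you propose is both unproven (you explicitly leave it as ``the main obstacle'') and structurally problematic. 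The paper's actual argument is a locality statement about the \emph{unreduced} product: a single coefficient of $\lambda_0\tilde s_h$ is a convolution sum computable in $O(\param)$, and one shows that at most $\#\supp(g_h)+1$ such coefficients can influence the single coefficient $\alpha$ of $\lambda_0\tilde s_h \bmod \tilde g_h$ at degree $\eta$. Concretely, the degrees that transitively influence $\alpha$ through the modular reduction have the form $\eta + t\deg g_h - i_1-\cdots-i_t$ with $i_s\in\supp(g_h)\setminus\{\deg g_h\}$; since these must stay below $\deg(\lambda_0\tilde s_h) < \eta+\deg g_h$, the hypothesis $\deg_2 g_h < \tfrac12\deg g_h$ forces $t\le 1$, which is where the case split in $\rho$ comes from (and which uses that the derivation is zero). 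In the other case one simply computes the whole reduced product in $O(\param^2)=O(\param\rho)$. So recomputing just the needed coefficients is exactly how the bound is achieved, and no amortisation is required.

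Moreover, your incremental scheme does not obviously close the gap on its own terms. To update $\lambda_0\tilde s_i \bmod \tilde g_i$ after the transformation $\lambda_0\leftarrow\lambda_0 - cx^\beta\lambda_h$ you need the reduced products $\lambda_h\tilde s_i\bmod\tilde g_i$ of row $h$ for all $i$, i.e.\ essentially the full matrix of \cref{lem:msVar}; storing it costs $O(\ell^2\param)$ memory, which breaks your own $O(\ell\param)$ memory claim, while storing only the active row's products leaves you stranded after a swap, when the new row $0$ is a row whose products you never maintained. Until you either supply the sparsity/locality argument above or make the amortised accounting precise (including its memory footprint), the complexity claim for \cref{line:dd_coefficient} --- and hence the $O(\ell\param^2\rho)$ bound --- is not established.
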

\begin{proof}[proof sketch]
  We only prove the complexity statement.
  Clearly, all steps of the algorithm are essentially free except \cref{line:dd_coefficient} and \cref{line:dd_transformation}.
  Observe that every iteration of the while-loop decrease an \emph{estimate} on the value of row 0, whether we enter the if-branch in \cref{line:dd_ifstatement} or not.
  So by the arguments of the proof of \cref{thm:ms_complexity}, the loop will iterate at most $O(\ell \param)$ times.
  Each execution of \cref{line:dd_transformation} costs $O(\param)$ since the $\lambda_j$ all have degree at most $\param$.

  For \cref{line:dd_coefficient}, we can compute the needed coefficient $\alpha$ in complexity $O(\param \rho)$: if $\deg_2 g_h > \frac{1}{2}\deg g_h$, we simply compute the entire polynomial $\lambda_0\p s_h \mod \p g_h$ in time $O(\param^2)$.
  \begin{inConference}
  Otherwise, an easy argument shows that at most $\#\supp(g_h) + 1$ coefficients of $\lambda_j \p s_h$ affects the computation of $\alpha$.
  \end{inConference}
  \begin{inJournal}
  Otherwise, assume $\deg_2 g_h < \deg g_h/2$.
  We wish to describe the degrees of the coefficients which affect the computation of $\alpha$.
  Clearly, $\eta$ is one such.
  Then follows degrees $\eta + \deg g_h - i$ for any $i \in \supp(g_h)\setminus \{\deg g_h\}$ if the modulo reduction adds some scalar multiple of $x^{\eta-i}g_h$.
  Any coefficient influencing these will also influence $\alpha$, so transitively one sees that all degrees affecting $\alpha$ have the form
  \[
  \eta + t\deg g_h - i_1 - \cdots - i_t \qquad t \in \N_0,\ i_s \in \supp(g_h)\setminus \{\deg g_h\}
  \]
  (note that we here use that the derivation of $\R$ is zero).
  Of course, the above should not exceed $\deg(\lambda_0\p s_h) < \eta + \deg g_h$, so since $\deg_2 g_h < \deg g_h/2$ then $t \leq 1$.
  This means that at most $\#\supp(g_h)+1 \leq \rho$ coefficients affect the computation of $\alpha$.
  \end{inJournal}
  Each of these can be computed by convolution in time $O(\param)$.
\end{proof}

For generic $g_i$, \cref{alg:dd} will have complexity $O(\ell \param^3)$ which is usually worse than $O(\ell^2 \param^2)$ of \cref{alg:ms}.
However, for decoding of Interleaved Gabidulin codes, two important cases are $g_i = x^k$ (syndrome decoding \cite{sidorenko2011skew}) and $g_i = x^{q^m} - 1$ (Gao-type decoding \cite[\S 3.2]{wachter2013interpolation}), and here \cref{alg:dd} runs in complexity $O(\ell \param^2)$.

\begin{remark}
\cref{alg:dd} bears a striking similarity to the Berlekamp--Massey variant for multiple shift registers \cite{sidorenko2011skew} where all $g_i$ are powers of $x$, and has the same running time in this case.
However, using the language of modules, we obtain a more general algorithm with a conceptually simpler proof, and we can much more readily realise algebraic properties of the algorithm.
For instance, using known properties for the weak Popov form, it is trivial to prove that \cref{alg:dd} can be modified to return a basis for \emph{all} solutions to the shift register problem, as well as decompose any given solution as an $\R$-linear combination of this basis.
\end{remark}

\section{Conclusion}\label{sec:conclusion}

In this paper, we have given two module-based methods for solving generalised shift register problems over skew polynomial rings.
For ordinary polynomial rings, module minimisation has proven a useful strategy for obtaining numerous flexible, efficient while conceptually simple decoding algorithms for Reed--Solomon and other code families.
Our results introduce the methodology and tools aimed at bringing similar benefits to Gabidulin, Interleaved Gabidulin and other skew polynomial-based codes.

\subsubsection*{Acknowledgements}

Sven Puchinger (grant BO 867/29-3), Wenhui Li and Vladimir Sidorenko (both grant BO 867/34-1) were supported by the German Research Foundation ``Deutsche Forschungsgemeinschaft'' (DFG).
Johan S. R. Nielsen gratefully acknowledges the support of the Digiteo foundation, project \href{http://idealcodes.github.io/}{\color{navy}{IdealCodes}}.

\bibliographystyle{abbrv}
\bibliography{ModMinIG}

\begin{thebibliography}{10}

\bibitem{abramov_solutions_2001}
S.~A. Abramov and M.~Bronstein.
\newblock On solutions of linear functional systems.
\newblock In {\em Proc. of ISSAC}, pages 1--6, New York, {NY}, {USA}, 2001.

\bibitem{alekhnovich_linear_2005}
M.~Alekhnovich.
\newblock Linear {D}iophantine equations over polynomials and soft decoding of
  {R}eed–{S}olomon codes.
\newblock {\em {IEEE} Trans. Inf. Theory}, 51(7):2257--2265, July 2005.

\bibitem{beckermann2006fraction}
B.~Beckermann, H.~Cheng, and G.~Labahn.
\newblock Fraction-free row reduction of matrices of {O}re polynomials.
\newblock {\em J. Symb. Comp.}, 41(5):513--543, 2006.

\bibitem{boucher2014linear}
D.~Boucher and F.~Ulmer.
\newblock Linear codes using skew polynomials with automorphisms and
  derivations.
\newblock {\em Designs, Codes and Cryptography}, 70(3):405--431, 2014.

\bibitem{cohn_ideal_2010}
H.~Cohn and N.~Heninger.
\newblock Ideal forms of {C}oppersmith's theorem and {G}uruswami–{S}udan list
  decoding.
\newblock {\em {arXiv}}, 1008.1284, 2010.

\bibitem{cohn1977skew}
P.~M. Cohn.
\newblock {\em Skew Field Constructions}, volume~27.
\newblock Cambridge Univ. Press, 1977.

\bibitem{delsarte1978bilinear}
P.~Delsarte.
\newblock Bilinear forms over a finite field, with applications to coding
  theory.
\newblock {\em J. Comb. Th.}, 25(3):226--241, 1978.

\bibitem{dieudonne1943det}
J.~Dieudonn\'e.
\newblock Les d\'eterminants sur un corps non commutatif.
\newblock {\em Bull. Soc. Math. France}, 71:27--45, 1943.

\bibitem{draxl1983skew}
P.~K. Draxl.
\newblock {\em Skew Fields}, volume~81.
\newblock Cambridge Univ. Press, 1983.

\bibitem{gabidulin1985theory}
E.~M. Gabidulin.
\newblock Theory of codes with maximum rank distance.
\newblock {\em Problemy Peredachi Informatsii}, 21(1):3--16, 1985.

\bibitem{gabidulin1991ideals}
E.~M. Gabidulin, A.~Paramonov, and O.~Tretjakov.
\newblock Ideals over a non-commutative ring and their application in
  cryptology.
\newblock In {\em Eurocrypt}, pages 482--489, 1991.

\bibitem{giorgi_complexity_2003}
P.~Giorgi, C.~Jeannerod, and G.~Villard.
\newblock On the complexity of polynomial matrix computations.
\newblock In {\em Proc. of ISSAC}, pages 135--142, 2003.

\bibitem{koetter2008coding}
R.~Koetter and F.~R. Kschischang.
\newblock Coding for errors and erasures in random network coding.
\newblock {\em {IEEE} Trans. Inf. Theory}, 54(8):3579--3591, 2008.

\bibitem{lee_list_2008}
K.~Lee and M.~E. O'Sullivan.
\newblock List decoding of {R}eed–{S}olomon codes from a {G}röbner basis
  perspective.
\newblock {\em J. Symb. Comp.}, 43(9):645 -- 658, 2008.

\bibitem{lenstra85factoring}
A.~Lenstra.
\newblock Factoring multivariate polynomials over finite fields.
\newblock {\em J. Comp. Syst. Sc.}, 30(2):235--246, 1985.

\bibitem{LSS2014}
W.~Li, V.~Sidorenko, and D.~Silva.
\newblock On transform-domain error and erasure correction by {G}abidulin
  codes.
\newblock {\em Designs, Codes and Cryptography}, 73(2):571--586, 2014.

\bibitem{loidreau2006decoding}
P.~Loidreau and R.~Overbeck.
\newblock Decoding rank errors beyond the error correcting capability.
\newblock In {\em Proc. of ACCT}, pages 186--190, 2006.

\bibitem{mulders2003lattice}
T.~Mulders and A.~Storjohann.
\newblock On lattice reduction for polynomial matrices.
\newblock {\em J. Symb. Comp.}, 35(4):377--401, 2003.

\bibitem{nielsen2013generalised}
J.~S.~R. Nielsen.
\newblock Generalised multi-sequence shift-register synthesis using module
  minimisation.
\newblock In {\em Proc. of IEEE ISIT}, pages 882--886, 2013.
\newblock Extended version at
  \href{http://arxiv.org/abs/1301.6529}{http://arxiv.org/abs/1301.6529}.

\bibitem{nielsen2014power}
J.~S.~R. Nielsen.
\newblock Power decoding {R}eed--{S}olomon codes up to the {J}ohnson radius.
\newblock In {\em Proc. of ACCT}, 2014.

\bibitem{nielsen_sub-quadratic_2014}
J.~S.~R. Nielsen and P.~Beelen.
\newblock Sub-quadratic decoding of one-point {H}ermitian codes.
\newblock {\em {arXiv}}, 1405.6008, May 2014.
\newblock Submitted to {IEEE} Trans. Inf. Theory.

\bibitem{ore1933theory}
O.~Ore.
\newblock Theory of non-commutative polynomials.
\newblock {\em Annals of Mathematics}, 34(3):480--508, July 1933.

\bibitem{roth1991maximum}
R.~M. Roth.
\newblock Maximum-rank array codes and their application to crisscross error
  correction.
\newblock {\em {IEEE} Trans. Inf. Theory}, 37(2):328--336, 1991.

\bibitem{schmidt2009collaborative}
G.~Schmidt, V.~R. Sidorenko, and M.~Bossert.
\newblock Collaborative decoding of interleaved {R}eed--{S}olomon codes and
  concatenated code designs.
\newblock {\em {IEEE} Trans. Inf. Theory}, 55(7):2991--3012, 2009.

\bibitem{sidorenko2011skew}
V.~Sidorenko, L.~Jiang, and M.~Bossert.
\newblock Skew-feedback shift-register synthesis and decoding interleaved
  {G}abidulin codes.
\newblock {\em {IEEE} Trans. Inf. Theory}, 57(2):621--632, 2011.

\bibitem{silva2008rank}
D.~Silva, F.~R. Kschischang, and R.~Koetter.
\newblock A rank-metric approach to error control in random network coding.
\newblock {\em {IEEE} Trans. Inf. Theory}, 54(9):3951--3967, 2008.

\bibitem{wachter2013interpolation}
A.~Wachter-Zeh and A.~Zeh.
\newblock Interpolation-based decoding of interleaved {G}abidulin codes.
\newblock In {\em Proc. of WCC}, pages 528--538, 2013.

\bibitem{zhou_efficient_2012}
W.~Zhou and G.~Labahn.
\newblock Efficient algorithms for order basis computation.
\newblock {\em J. Symb. Comp.}, 47(7):793--819, July 2012.

\end{thebibliography}

\newpage

\appendix

\begin{inJournal}
\section{Proofs about free $\R$-modules}\label{sec:appendix_IBN}

Let $\R=\F[x;\theta,\delta]$ be the ring of skew polynomials over a finite field.

\vspace{0.5cm}

\begin{lemma}\label{lem:pid}
$\R$ is a principal ideal domain.
\end{lemma}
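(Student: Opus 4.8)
The plan is to verify the two defining properties in turn: that $\R$ is a domain, and that every left and right ideal is principal. Since $\R$ carries the degree function $\deg$ taking values in $\N_0 \cup \{-\infty\}$, the natural tool is a Euclidean division algorithm, from which principality follows by the classical minimal-degree argument. Recall that $\theta$ is an automorphism here (over a finite field every injective endomorphism is automatically surjective), a fact that will be needed for the right-ideal case.

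First I would establish that $\R$ has no zero divisors. For nonzero $f, g \in \R$ the multiplication rule $x a = \theta(a) x + \delta(a)$, extended to $x^k a = \theta^k(a)x^k + (\text{lower-degree terms})$, shows that the coefficient of $x^{\deg f + \deg g}$ in $fg$ is exactly $\LC{f}\,\theta^{\deg f}(\LC{g})$. Because $\theta$ is injective and $\F$ is a field, this product is nonzero, so $\deg(fg) = \deg f + \deg g$ and in particular $fg \neq 0$. Note that the derivation $\delta$ contributes only terms of strictly smaller degree, so it plays no role in this leading-coefficient computation.

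Next I would give a right division algorithm: for $f,g \in \R$ with $g \neq 0$ there exist $q, r \in \R$ with $f = qg + r$ and $\deg r < \deg g$. This goes by induction on $\deg f$. If $\deg f \geq \deg g$, writing $n = \deg f$ and $m = \deg g$, one cancels the leading term of $f$ by subtracting $\alpha x^{n-m} g$ with $\alpha = \LC{f}\,\theta^{n-m}(\LC{g})^{-1}$; this is well defined precisely because $\theta^{n-m}(\LC{g})$ is a nonzero element of the field $\F$, hence invertible. The remainder $f - \alpha x^{n-m}g$ has strictly smaller degree, and induction finishes the step. With right division in hand, the standard argument applies: for a nonzero left ideal $I$, pick $g \in I$ of minimal degree; any $f \in I$ gives $f = qg + r$ with $r = f - qg \in I$ and $\deg r < \deg g$, forcing $r = 0$, so $I = \R g$.

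The main subtlety will be obtaining principality of right ideals as well, which needs the symmetric left division $f = g q + r$. There the cancelling coefficient is $\theta^{-m}(\LC{g}^{-1}\LC{f})$, so one must invert $\theta$. This is exactly where the finite-field hypothesis matters: $\theta$ is an automorphism, $\theta^{-1}$ exists, and left division then proceeds identically, showing every right ideal is principal. Combining the domain property with two-sided principality yields that $\R$ is a principal ideal domain.
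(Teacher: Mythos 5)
Your proof is correct and follows the same route as the paper, which simply observes that $\R$ is a (left and right) Euclidean domain by virtue of the two division algorithms and concludes principality from there. You have merely filled in the standard details (degree of a product, the explicit cancelling coefficients $\LC{f}\,\theta^{n-m}(\LC{g})^{-1}$ and $\theta^{-m}(\LC{g}^{-1}\LC{f})$, and the minimal-degree argument), all of which check out.
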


\begin{proof}
The claim follows directly from the fact that $\R$ is a Euclidean domain (due to the existence of a left- and right- division algorithm).
\end{proof}

\vspace{0.5cm}

\begin{theorem}
Let $F$ be a free left (right) module over $\R$ which has a finite basis $\{x_1, \dots, x_n\}$ for some $n \in \mathbb{N}$, and $M \subset F$ a left (right) submodule. Then $M$ is free and there exists a basis of $M$ which has $\leq n$ elements.
\end{theorem}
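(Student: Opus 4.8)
The plan is to induct on the rank $n$, using that $\R$ is a (left and right) principal ideal domain by \cref{lem:pid} and, crucially, a domain with no zero divisors (since $\deg(fg)=\deg f+\deg g$). I argue the left-module case; the right-module case is entirely symmetric. For the base case $n=1$ we identify $F$ with $\R$ via $x_1$, so a submodule $M$ is just a left ideal. By the principal ideal property $M=\R a$ for some $a\in\R$; if $a=0$ then $M=0$ is free on the empty basis, while if $a\neq 0$ then $\{a\}$ is a basis, with linear independence following from $\R$ being a domain. In either case $M$ is free with at most $1$ generator.

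For the inductive step, assume the claim for all free modules of rank $<n$ and write $F=\R x_1\oplus\dots\oplus\R x_n$. Consider the left-module homomorphism $\pi\colon F\to\R$ sending $\sum_i r_i x_i$ to its last coordinate $r_n$; its kernel is the free rank-$(n-1)$ submodule $F'=\R x_1\oplus\dots\oplus\R x_{n-1}$. The image $\pi(M)$ is a left ideal of $\R$, hence principal: $\pi(M)=\R a$. If $a=0$ then $M\subseteq F'$, and the induction hypothesis applied to $M\subseteq F'$ shows $M$ is free on at most $n-1$ elements.

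If $a\neq 0$, pick $y\in M$ with $\pi(y)=a$ and set $M'=M\cap F'$, a submodule of the rank-$(n-1)$ module $F'$; by induction $M'$ is free on a basis $y_1,\dots,y_m$ with $m\leq n-1$. I claim $\{y_1,\dots,y_m,y\}$ is a basis of $M$. For spanning, given $z\in M$ write $\pi(z)=ra\in\R a$; then $z-ry\in\ker\pi\cap M=M'$, so $z-ry$ is a left $\R$-combination of the $y_j$. For independence, suppose $\sum_j s_j y_j + r y=0$; applying $\pi$ and using $y_j\in\ker\pi$ gives $ra=0$, whence $r=0$ because $\R$ is a domain and $a\neq 0$; then $\sum_j s_j y_j=0$ forces all $s_j=0$ by independence of the $y_j$. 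Thus $M$ is free with $m+1\leq n$ basis elements, completing the induction.

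The routine parts are the verifications that $\pi$ is a homomorphism and that $\pi(M)$ and $M\cap F'$ are genuine left submodules. The one point genuinely requiring noncommutative care is ensuring every construction stays a \emph{left}-module map so that images and intersections remain left submodules, together with the single essential use of the domain property---that $ra=0$ with $a\neq 0$ forces $r=0$---in the independence argument. This is precisely where the classical commutative proof carries over verbatim, since it never relied on commutativity but only on $\R$ being a principal ideal domain without zero divisors.
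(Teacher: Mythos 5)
Your proof is correct and is essentially the same argument as the paper's: both reduce to the principality of the left ideal of ``last coordinates'' of elements of the submodule (\cref{lem:pid}), adjoin one preimage of its generator to an inductively obtained basis of the intersection with the smaller free module, and conclude. Your independence check, which applies the projection and uses that $\R$ is a domain to get $ra=0\Rightarrow r=0$, is in fact slightly more explicit than the paper's corresponding step, but the underlying idea is identical.
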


\begin{proof}
The idea of this proof is taken from \cite{lang2002algebra}. We prove the claim by constructing a basis of $F$ inductively. Before we start, we have to define the sets $M_r := M \cap (x_1, \dots, x_r)$ for any $r \in \mathbb{N}$ with $1 \leq r \leq n$. Note that we only show the claim for left modules. The right module case can be shown equivalently.

\vspace{0.2cm}
{\bf Base case:}

We first consider  $M_1 = M \cap (x_1)$. Since $M$ and $(x_1)$ are both left modules over $\R$, $M_1$ is also a left module over $\R$. Let $J$ be the set of all $a \in \R$ such that $a x_1 \in M$. Then $J$ is a left ideal of $\R$ because for any $a,b \in J$ and $r \in \R$ also $(a+b) \in J$ (because $(a+b) x_1 = a x_1 + b x_1 \in M$) and $(ra) \in J$ (because $(ra) x_1 = r (a x_1) \in M$). Due to \cref{lem:pid}, $J$ is (left) principal and therefore generated by some $a_1 \in \R$. Hence, $M_1$ must be of the form $(a_1 x_1)$. If $a_1 \neq 0$, $\{a_1 x_1\}$ is a basis of $M_1$. Otherwise, $M_1 = \{0\}$ and $\emptyset$ is a basis of $M_1$. Thus, $M_1$ is free and has a basis with $\leq 1$ elements.

\vspace{0.2cm}
{\bf Induction hypothesis:}

Now we assume that $M_r = M \cap (x_1, \dots, x_r)$ is free and has a basis $\{\tilde{x}_1, \dots, \tilde{x}_{\ell}\}$ with $0 \leq \ell \leq r$ for some $r$ with $1 \leq r < n$.

\vspace{0.2cm}
{\bf Inductive step:}

If $M_r = M_{r+1}$, we are done, because by hypothesis $M_{r+1}$ is free with the same basis as $M_r$ which also obviously has $\leq (r+1)$ elements. In the following we therefore assume that $M_{r+1} \setminus M_r \neq \emptyset$.

Let $I$ be the set of all $a \in \R$ such that there exists an element $x^{(a)} \in M$ and $b_1^{(a)}, \dots, b_r^{(a)} \in \R$ with $a x_{r+1} = x^{(a)} - \sum_{i=1}^{r}{b_i^{(a)} x_i}$. Then $I$ is a left ideal of $\R$ because if $a,b \in I$ and $r \in \R$, then $a+b \in I$ because $x^{(a+b)} = x^{(a)}+x^{(b)} \in M$ and $b_i^{(a+b)} = b_i^{(a)}+b_i^{(b)} \in \R$ and $r a \in I$ because $x^{(r a)} = r x^{(a)} \in M$ and $b_i^{(r a)} = r b_i^{(a)} \in \R$. Due to \cref{lem:pid}, $I$ is (left) principal and is therefore generated by some element $a_{r+1} \in \R$. Since we assume that $M_{r+1} \neq M_r$, $a_{r+1}$ must not be $0$. We can choose a $\gamma \in M_{r+1}$ such that $\gamma = x^{(a_{r+1})} \neq 0$.

\vspace{0.2cm}

{\it (Sub-)Claim: $B_{r+1} := \{\tilde{x}_1, \dots, \tilde{x}_{\ell}, \gamma\}$ is a basis of $M_{r+1}$.}

{\it (i) $B_{r+1}$ is a generating set of $M_{r+1}$:}

For any $x \in M_{r+1}$, the coefficient of $x$ with respect to $x_{r+1}$ is in $I$ and therefore divisible by $a_{r+1}$. Hence, there exists a $d \in \R$ such that the coefficient of $x - d \gamma$ with respect to $x_{r+1}$ is $0$ and therefore $x - d \gamma \in M_{r}$. This shows that $\{\tilde{x}_1, \dots, \tilde{x}_{\ell}, \gamma\}$ is a generating set of $M_{r+1}$.

{\it (ii) The elements of $B_{r+1}$ are linearly independent:}

Since $M_{r+1} \neq M_r$, there is an $x \in M_{r+1}$ which is not in $M_{r}$. Like in the previous statement, there is a $d \in \R$ such that $x - d \gamma \in M_{r}$ (note that here $d \neq 0$ because $x \notin M_r$). This shows that $\gamma \notin M_r$ because if it was, $d \gamma$ would be in $M_r$ (because $M_r$ is a module) and $x = (x - d \gamma) + d \gamma \in M_r$, which would be a contradiction. Thus, $\gamma \notin M_r = (\tilde{x}_1, \dots, \tilde{x}_{\ell})$ which implies that $\tilde{x}_1, \dots, \tilde{x}_{\ell}$ and $\gamma$ are linearly independent.

Hence, $B_{r+1}$ is a basis of $M_{r+1}$ which has $\ell+1 \leq r+1$ elements (because by hypothesis $\ell \leq r$).

\vspace{0.2cm}
{\bf Conclusion:}

At this point we found a basis $B_n$ for $M_n$ which has $\leq n$ elements. Since $\{x_1, \dots, x_n\}$ is a basis of $F$ and $M \subset F$, $M_n = M \cap (x_1, \dots, x_n) = M \cap F = M$ and therefore $B_n$ is also a basis of $M$. Note that the existence of a basis directly implies that $M$ is free.

\end{proof}

\vspace{0.5cm}

\begin{lemma}
$\R$ is (left- and right-) Noetherian. Hence, any finitely generated $\R$-module is Noetherian which implies that any submodule of a finitely generated $\R$-module is finitely generated.
\end{lemma}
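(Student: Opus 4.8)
The plan is to obtain all three assertions directly from \cref{lem:pid} and the preceding theorem on submodules of free modules, avoiding the usual short-exact-sequence machinery.

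First I would settle that $\R$ is left- and right-Noetherian. By \cref{lem:pid}, $\R$ is a (left and right) principal ideal domain, so every left ideal and every right ideal is generated by a single element and is in particular finitely generated. A ring whose one-sided ideals are all finitely generated is Noetherian on that side, so $\R$ is both left- and right-Noetherian.

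Next I would show that any finitely generated left $\R$-module $M$ is Noetherian, i.e.\ that every submodule of $M$ is finitely generated. Choosing generators $m_1,\dots,m_n$ of $M$, let $\pi:\R^{n}\to M$ be the surjection sending the $i$th standard basis vector to $m_i$. For a submodule $N\subseteq M$, the preimage $\pi^{-1}(N)$ is a submodule of the free module $\R^{n}$, so by the preceding theorem it is free with a basis of at most $n$ elements and hence finitely generated. Since $\pi$ is surjective, $N=\pi(\pi^{-1}(N))$ is generated by the images under $\pi$ of those finitely many generators, so $N$ is finitely generated. The right-module statement is proved identically, using the right-module form of the preceding theorem.

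Finally, the last assertion --- that any submodule of a finitely generated $\R$-module is finitely generated --- is exactly what the previous step establishes, since a module is Noetherian precisely when all of its submodules are finitely generated; no further work is needed. I do not expect a genuinely hard step here: the argument is the familiar commutative one, and the only thing to watch is the noncommutative bookkeeping, namely keeping left and right modules apart and invoking the correct one-sided form of \cref{lem:pid} and of the free-submodule theorem. Because $\pi$ and the preimage/image operations use only the left-module structure, the classical reasoning transfers verbatim.
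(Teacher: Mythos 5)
Your proof is correct. The first assertion is handled exactly as in the paper: \cref{lem:pid} gives that every one-sided ideal is principal, hence finitely generated, hence $\R$ is Noetherian on both sides. For the second assertion the paper is much terser --- it simply says the claim ``follows from the definition of Noetherian rings,'' implicitly invoking the standard fact that a finitely generated module over a Noetherian ring is Noetherian (usually proved by induction on the number of generators via short exact sequences). You instead derive it concretely from the appendix's free-submodule theorem: pull a submodule $N\subseteq M$ back along the surjection $\pi:\R^n\to M$, note that $\pi^{-1}(N)$ is free with a basis of at most $n$ elements, and push the finitely many generators forward. This buys a self-contained argument that reuses machinery already established in the paper and avoids appealing to an unproved general theorem; the cost is that it is specific to rings (like $\R$) whose free modules have the free-submodule property, whereas the paper's implicit route works over any Noetherian ring. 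Both are valid; yours is arguably the more honest proof in the context of this appendix.
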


\begin{proof}
A ring is (left-/right-) Noetherian if all its ideals are finitely generated, which is the case for $\R$ because it is (left-/right-) principal and therefore any (left-/right-) ideal is generated by one element. The second part of the claim follows from the definition of Noetherian rings.
\end{proof}

\vspace{0.5cm}

\begin{theorem}
Any two finite bases of a left (right) $\R$-module have the same number of elements.
\end{theorem}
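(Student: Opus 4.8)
The plan is to reduce the statement to the Invariant Basis Number property of the left field of fractions $\Q$, which is a division ring and therefore a place where the theory of dimension is well-behaved. Concretely, having a finite basis of size $n$ means the module is isomorphic to $\R^n$, so the content of the theorem is that $\R^n \cong \R^m$ as left $\R$-modules forces $n=m$; I would establish this by turning a pair of bases into mutually inverse rectangular matrices over $\R$ and then testing them over $\Q$.

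First I would extract change-of-basis matrices. Let $\{b_1,\dots,b_n\}$ and $\{c_1,\dots,c_m\}$ be two bases of a left $\R$-module $F$. Writing each $b_i = \sum_j p_{ij} c_j$ and each $c_j = \sum_k q_{jk} b_k$ and substituting one into the other, the module axioms give $b_i = \sum_k \big(\sum_j p_{ij} q_{jk}\big) b_k$; linear independence of the $b_i$ then forces $\sum_j p_{ij} q_{jk} = \delta_{ik}$, that is $PQ = I_n$ for the matrices $P=(p_{ij}) \in \R^{n\times m}$ and $Q=(q_{jk}) \in \R^{m\times n}$. By the symmetric computation $QP = I_m$. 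Note that noncommutativity causes no difficulty here, since every coefficient sits on the left of a basis element throughout, so associativity of the module action lets the products $p_{ij}q_{jk}$ assemble into the ordinary matrix product.

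Next I would pass to $\Q$. Recall from \cref{sec:complexity} that $\R$, being a principal left ideal domain, is left Ore and hence embeds in its left field of fractions $\Q$, which is a division ring. The identities $PQ = I_n$ and $QP = I_m$ persist verbatim over $\Q$, so $P$ and $Q$ are mutually inverse $\Q$-linear maps between $\Q^m$ and $\Q^n$, giving $\Q^m \cong \Q^n$ as $\Q$-vector spaces. Standard dimension theory for vector spaces over a division ring (see e.g.~\cite{cohn1977skew}) then yields $n=m$, which is exactly the claim.

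The only genuinely delicate point, and the step I would be most careful about, is the sidedness bookkeeping. Since $\Q$ is specifically the \emph{left} field of fractions and all modules in play are left modules, one must verify that the change-of-basis relations really compose into honest matrix identities with the conventional matrix product, and fix the left/right convention under which $P$ and $Q$ act as $\Q$-linear maps, so that no unwanted twist by $\theta$ or $\delta$ is introduced. Once this convention is pinned down correctly, the conclusion is immediate: the entire arithmetic content of the theorem is inherited from linear algebra over the division ring $\Q$, where mutually inverse rectangular matrices are forced to be square.
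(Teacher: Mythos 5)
Your proof is correct, but it takes a genuinely different route from the paper. You reduce the Invariant Basis Number property of $\R$ to that of its left field of fractions $\Q$: extract mutually inverse change-of-basis matrices $P \in \R^{n\times m}$, $Q \in \R^{m\times n}$ with $PQ = I_n$ and $QP = I_m$, view them over the division ring $\Q$, and invoke dimension theory for vector spaces over division rings. Your sidedness worry resolves cleanly: letting $P$ act by \emph{right} multiplication on row vectors makes it a left $\Q$-module homomorphism $\Q^n \to \Q^m$ by associativity of matrix multiplication, with no twist by $\theta$ or $\delta$ entering anywhere. The paper instead argues via Noetherianity: from $\R^n \cong \R^m$ with $n > m$ it builds an embedding $\R^m \oplus \R^{n-m} \hookrightarrow \R^m$, iterates it to produce an infinite direct sum of non-zero submodules inside $\R^m$, and derives a contradiction with the fact that submodules of a finitely generated module over the (left) principal ideal domain $\R$ are finitely generated. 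The trade-off is this: the paper's argument is self-contained given its lemma that $\R$ is Noetherian and generalises to any left Noetherian ring, at the cost of a somewhat delicate inductive construction of the infinite direct sum; your argument is shorter and piggybacks on machinery the paper already sets up in \cref{sec:complexity} (the Ore condition and the left field of fractions $\Q$ used for the \Dieu determinant), and it illustrates the general principle that IBN is inherited along any ring homomorphism into a ring with IBN. Either is a complete proof; yours is arguably the more economical one in the context of this paper.
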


\begin{proof}
Again we show the claim only for left modules, the right case can be shown equivalently. The ideas of this proof are taken from \cite{clark2012noncom}. Assume we have two bases $B_1$ and $B_2$ of the same $\R$-module with $|B_1| =: n \in \mathbb{N}$ and $|B_2| =: m \in \mathbb{N}$. We want to show that $n = m$.

We first note that from any $\R$-module with a basis $B = \{b_1, \dots, b_k\}$ of cardinality $k$, there is a natural module isomorphism $\varphi: (B) \rightarrow \R^k$, $\varphi(x) = (x_1, \dots, x_k)$ which is defined by the unique representation $x = \sum_{i=1}^{k} x_i b_i$ for any $x \in (B)$.

This means that $\R^n \cong (B_1) = (B_2) \cong R^n$, so $\R^n \cong R^m$, or in other words there is an isomorphism $\alpha: \R^n \rightarrow \R^m$.

We assume $n > m$. This means that $\alpha$ gives us an embedding $\alpha: \R^n = \R^m \bigoplus \R^{n-m} \hookrightarrow \R^m$ with $\R^{n-m} \neq \{0\}$ because $n>m$ and $\R \neq \{0\}$. Therefore, $A_0 := \R^m$ must have a submodule of the form $A_1 \bigoplus B_1$ such that $A_1 \cong \R^m$ and $B_1 \cong \R^{n-m} \neq \{0\}$. Constructing $A_i \bigoplus B_i$ as a submodule of $A_{i-1}$ inductively with the same approach, we obtain a submodule $\bigoplus_{i=1}^{\infty}B_i$ of $\R^m$ which is an infinite direct product of non-zero modules and can therefore not be finitely generated. This contradicts the fact that $\R^m$ is a (left) Noetherian module, which means that all its submodules are finitely generated. Hence, $n \leq m$. By the same argument we get that $m \leq n$ and therefore $n = m$.

\end{proof}

\section{Alternative Proofs}

{\bf Alternative proof of \cref{lem:bpr}}

\begin{proof}
Let $V'$ be the square matrix obtained from $V$ by replacing the $j$th row of $V$ by $v'_j = v_j - \alpha x^\delta v_i$ ($j \neq i$) and $v'_k = v_k$ for all $k \neq j$. Let $\varphi : \mathcal{R}^{\ell+1} \rightarrow \mathcal{R}^{\ell+1}, \varphi\left(a_0, \dots, a_{\ell}\right) := \left(\tilde{a}_0, \dots, \tilde{a}_{\ell}\right)$ with $\tilde{a}_k := \begin{cases}a_k, &\text{if } k \neq i \\ a_i - a_j \alpha x^\delta, &\text{if } k=i\end{cases}$. It is easy to construct the inverse function $\varphi^{-1}$. Thus, $\varphi$ is bijective. \\
{\it (Sub-)Claim}: For any $a = \left(a_0, \dots, a_{\ell}\right) \in \mathcal{R}^{\ell+1}$ and the corresponding $v \in \V$ such that $v = \sum_{k=0}^{\ell} a_k v'_k$ it holds that
\begin{IEEEeqnarray*}{rCl+c+rCl}
v &=& \sum_{k=0}^{\ell} \tilde{a}_k v_k &\iff& \tilde{a} &=& \varphi(a)
\end{IEEEeqnarray*}
{\it''$\Leftarrow$''}: Assume that $\tilde{a} = \varphi(a)$. Then
\begin{IEEEeqnarray*}{rCl}
v &=& \sum_{k=0}^{\ell} a_k v'_k \\
  &=& a_j v'_j + \sum_{k=0, k \neq j}^{\ell} a_k v_k \\
  &=& a_j (v_j - \alpha x^\delta v_i) + \sum_{k=0, k \neq j}^{\ell} a_k v_k \\
  &=& a_j v_j + (a_i - a_j \alpha x^\delta) v_i + \sum_{k=0, k \neq i, k \neq j}^{\ell} a_k v_k \\
  &=& \sum_{k=0}^{\ell} \tilde{a}_k v_k
\end{IEEEeqnarray*}
{\it''$\Rightarrow$''}: Assume that $v = \sum_{k=0}^{\ell} \tilde{a}_k v_k$. Then 
\begin{IEEEeqnarray*}{rClCl}
0 &=& v - v &=& \sum_{k=0}^{\ell} \tilde{a}_k v_k -\sum_{k=0}^{\ell} a_k v'_k \\
  &&&=& (\tilde{a}_i v_i - a_i v'_i) + (\tilde{a}_j v_j - a_j v'_j) + \sum_{k=0,k\neq i, k \neq j}^{\ell} (\tilde{a}_k v_k - a_k v'_k) \\
  &&&=& (\tilde{a}_i v_i - a_i v_i) + (\tilde{a}_j v_j - a_j (v_j - \alpha x^\delta v_i)) + \sum_{k=0,k\neq i, k \neq j}^{\ell} (\tilde{a}_k v_k - a_k v_k) \\
  &&&=& (\tilde{a}_i - (a_i - a_j \alpha x^\delta)) v_i + \sum_{k=0,k\neq i}^{\ell} (\tilde{a}_k - a_k) v_k
\end{IEEEeqnarray*}
Since $V$ is a basis of $\mathcal{V}$, its rows $v_k$ are linearly independent and therefore the coefficients of the sum with respect to each $v_k$ must be zero, which implies that $\tilde{a}_k =
\begin{cases}
a_k, &\text{if } k \neq i \\
a_i - a_j \alpha x^\delta, &\text{if } k = i
\end{cases}$, or in other words $\tilde{a} = \varphi(a)$.

Since $V$ is a basis of $\mathcal{V}$, for any $v \in \mathcal{M}$ there is a unique $a = \left(a_0, \dots, a_{\ell}\right) \in \mathcal{R}^{\ell+1}$ such that $v = \sum_{k=0}^{\ell} a_k v_k$. Due to the claim the only $(\bar{a}_0, \dots, \bar{a}_\ell) \in \R^{\ell+1}$ which fulfils $v = \sum_{k=0}^{\ell} \bar{a}_k v'_k$ is $\bar{a} = \varphi^{-1}(a)$, which implies that $V'$ is also a basis of $\mathcal{V}$.
\end{proof}

\end{inJournal}

\end{document}